\documentclass[10pt,journal,cspaper,compsoc]{IEEEtran}
\usepackage{times,amsmath,amssymb,graphicx,epsf,psfrag,epsfig,cite,color,multirow}
\usepackage[hyphens]{url}
\usepackage{graphicx,mathrsfs}
\usepackage{algpseudocode}
\usepackage{algorithm}
\newtheorem{theorem}{Theorem}
\newtheorem{lemma}{Lemma}

\newtheorem{proposition}{Proposition}
\newtheorem{property}{Property}
\newcommand{\eg}{\emph{e.g., }}
\newcommand{\mdef}{\overset{\Delta}{=}}
\newcommand{\expectation}{\mathbb{E}}
\newcommand{\ie}{\emph{i.e., }}

\newcommand{\mj}{\mathcal{J}}

\newcommand{\mc}{\mathcal{C}}

\newcommand{\mm}{\mathcal{M}}
\newcommand{\mb}{\mathcal{B}}

\newcommand{\mk}{\mathcal{K}}

\setcounter{MaxMatrixCols}{30}
\usepackage{amsfonts}

\addtolength{\abovedisplayskip}{-1mm}
\addtolength{\belowdisplayskip}{-1mm}
\addtolength{\parskip}{-0.3mm}
\addtolength{\abovecaptionskip}{-1.5mm}
\addtolength{\belowcaptionskip}{-1.5mm}

\IEEEaftertitletext{\vspace{-1\baselineskip}}

\ifodd 0
\newcommand{\rev}[1]{{\color{blue}#1}} 
\newcommand{\com}[1]{\textbf{\color{red} (COMMENT: #1) }} 
\newcommand{\comg}[1]{\textbf{\color{green} (COMMENT: #1)}}
\else
\newcommand{\rev}[1]{#1}

\newcommand{\com}[1]{}
\newcommand{\comg}[1]{}
\newcommand{\response}[1]{}
\fi

\begin{document}

\title{Dynamic Profit Maximization of Cognitive Mobile Virtual Network Operator}
\author{Shuqin Li,~\IEEEmembership{Member,~IEEE,}
        Jianwei Huang,~\IEEEmembership{Senior Member,~IEEE,}
        and~Shuo-Yen Robert Li,~\IEEEmembership{Fellow,~IEEE}
\IEEEcompsocitemizethanks{\IEEEcompsocthanksitem Shuqin Li is with Research and Innovation, Alcatel-Lucent Shanghai Bell Co., Ltd., D400, Bldg. 3, 388 Ningqiao Rd, Shanghai, 201206, China.
E-mail: Shuqin.Li@alcatel-sbell.com.cn This work was done when Shuqin Li was in The Chinese University of Hong Kong.

\IEEEcompsocthanksitem Jianwei Huang is with Department of Information Engineering, The  Chinese University of Hong Kong. E-mail: jwhuang@ie.cuhk.edu.hk.
Jianwei Huang is the corresponding author.

\IEEEcompsocthanksitem Shuo-Yen Robert Li is with  both Department of Information Engineering and Institute of Network Coding, The  Chinese University of Hong Kong. E-mail: bobli@ie.cuhk.edu.hk.}
\thanks{Part of the results have appeared in \emph{IEEE ICC} 2012\cite{li2012profit}. This work is supported by the General Research Funds (Project Number CUHK 412710 and CUHK 412511)  and AoE established under the University Grant Committee of the Hong Kong Special Administrative Region, China, and grants from China 973 Prog. No.2012CB315901 \& 2012CB315904.}}

%

\IEEEcompsoctitleabstractindextext{%
\begin{abstract}
We study the profit maximization problem of a cognitive virtual network operator  in a dynamic network environment. We consider a downlink OFDM communication system with various network dynamics, including dynamic user demands, uncertain sensing spectrum resources, dynamic spectrum prices, and time-varying channel conditions. In addition, heterogenous users and imperfect sensing technology are incorporated to make the network model more realistic.  By exploring the special structural of the problem,  we develop a low-complexity on-line control policies that determine pricing and resource scheduling without knowing the statistics of dynamic network parameters. We show that the proposed algorithms can achieve arbitrarily close to the optimal profit with a proper trade-off with the queuing delay.
\end{abstract}

\begin{IEEEkeywords}
\rev{Cognitive Radio, Profit Maximization, Pricing, Virtual Network Operator.}
\end{IEEEkeywords}}

\maketitle

\section{Introduction}
\label{sec:intro}
The limited wireless spectrum is becoming a bottleneck for meeting today's fast growing demands for wireless data services. More specifically, there is very little spectrum left that can be licensed to new wireless services and applications. However, extensive field measurements\cite{spectrum_measurement} showed that much of the licensed spectrum remains idle most of the time, even in densely populated metropolitan areas such as New York City and Chicago. A potential way to solve this dilemma is to manage and utilize the licensed spectrum resource in a more efficient way.

This is why the concept of \emph{Dynamic Spectrum Access} (DSA) has received enthusiastic support from governments and industries worldwide\cite{FCC,akyildiz2006next,zhao2007survey}. We can roughly classify various DSA approaches into two main categories:
the \emph{spectrum sensing} based ones and the \emph{spectrum leasing} (or market) based ones. The first category indicates a hierarchical access model, where unlicensed  secondary users  opportunistically access the under-utilized part of the licensed spectrum, with controlled interference to the licensed primary users. During this process, spectrum sensing helps the secondary users to detect the currently available  spectrum resource. In contrast, the second category relates to a dynamic exclusive use model, which allows licensees to trade spectrum usage right to the secondary users. In both categories, it is possible to have a secondary operator coordinating the transmissions of multiple secondary users.

There are pros and cons for both DSA categories. Spectrum sensing detects and identifies the available unused licensed spectrum
through  technologies such as beacons, geolocation system, and cognitive radio.
Form the secondary operator's perspective, the spectrum acquired by sensing is \emph{an unreliable resource}, since it cannot determine how much resource is available before sensing. Furthermore, imperfect sensing may lead to collisions with primary users, and thus reduce the incentives for the licensee to share the spectrum.
Therefore the secondary operator needs to carefully design sensing and access algorithm to \emph{control the collision probability under an acceptable level}.
In dynamic spectrum leasing,
a secondary operator acquires the exclusive right to use spectrum within a limited time period by paying the corresponding leasing price. Thus the spectrum acquired by spectrum leasing is a reliable resource. However, the cost can be high compared to the spectrum sensing cost, and is dynamically changing according to the demand and supply relationship in the market.

In this paper, we will consider a hybrid model, where a secondary operator obtains resources from the primary licensees through both \emph{spectrum sensing} and \emph{dynamic spectrum leasing}, and provides services to the secondary unlicensed users.
Our study is motivated by \cite{duan2010cognitive,duan2011cognitive}, in which the authors introduced the new concept of Cognitive Mobile Virtual Network Operator (C-MVNO). The C-MVNO is a generalization of the existing business model of  MVNO \cite{MVNO_intro_2}, which refers to the network operator who does not own a licensed frequency spectrum or even wireless infrastructure, but resells wireless services under its own brand name. The MVNO business model has been very successful after more than 10 years' development, and there are more than 600 MVNOs today\cite{MVNO_intro,MVNO_list}. The C-MVNO model generalizes the MVNO model with DSA technologies, which allow the virtual operator to obtain spectrum resources through both spectrum sensing and leasing.
The C-MVNO model can be applied to a wild range of wireless scenarios.
One example is the IEEE 802.22 standard\cite{stevenson2009ieee}, which suggests that the cognitive radio network using white space in TV spectrum will operate on a point to multipoint basis (\ie a base station to customer-premises equipments). Such a secondary base station can be operated by a  C-MVNO.

The key difference between our work and the ones in
\cite{duan2010cognitive,duan2011cognitive} is that we study a much more realistic dynamic network in this paper. In \cite{duan2010cognitive,duan2011cognitive}, the authors formulated the problem based on a static network scenario, and provided interesting equilibrium results through a one-shot Stackelberg game. However, the real network is highly dynamic.
For example, users arrive and leave the systems randomly, the statistics of spectrum availability changes over time, and the spectrum-sensing results are imperfect. Also the leasing price is often unpredictable and changing from time to time. These dynamics and realistic concerns make the network model and the corresponding analysis rather challenging.

In this paper, we focus on the profit maximization problem for C-MVNO in a dynamic network scenario. Our key results and contributions are summarized as follows.

\begin{itemize}
    \item \emph{A dynamic network decision model:} Our model incorporates various key dynamic aspects of a cognitive radio network and the dynamic decision process of a C-MVNO. We model sensing channel availability, leasing market price, and channel conditions as exogenous stochastic processes.

\item \emph{Dynamic user demands:} We allow users to dynamically join the network with random demands (file sizes).
The demand is affected by both the transmission prices (decision variables) and market states (exogenous stochastics).

\item \emph{Realistic cognitive radio model:} We incorporate various practical issues such as imperfect spectrum sensing, primary users' collision tolerance, and sensing technology selection.
The operator needs to choose a sensing technology to trade-off between cost and performance.

	 \item \emph{A low-complexity on-line control policy:} {By exploiting the special structure of the problem,} we design a low-complexity on-line pricing and resource allocation policy, which can achieve arbitrarily close to the operator's optimal profit. The policy does not require precise information of the dynamic network parameters, has a low system overhead, and is easy to implement.

\end{itemize}


The remainder of the paper is organized as follows. In Section II, we introduce the related work. In Section~\ref{sec:system_model}, we introduce the system model. Section~\ref{sec:problem_formulation} describes the problem formulation. In Section~\ref{sec:PMC_policy}, we propose the profit maximization control (PMC) policy for homogeneous users and analyze its performance. We further extend profit maximization control policy (M-PMC policy) to heterogeneous users in Section~\ref{sec:MPMC_policy}. Section~\ref{sec:simulation} provides simulation results for both PMC and M-PMC polices. Finally, we conclude the paper in Section~\ref{sec:conclusion}.

\section{Related Work}
\label{sec:literature}

Among the vast literature on cognitive radio, we will focus on the results on \emph{operator-oriented} cognitive radio networks, where secondary operators play key roles in terms of coordinating the transmissions of the secondary users.
These studies only started to emerge recently, e.g.,\cite{al2008stackelberg,duan2010cognitive,duan2011cognitive,yu2010pricing,jia2008competitions,duan2011competition,ileri2005demand,elias2010joint,niyato2008dynamics,sengupta2007economic,gao2010map,jia2009revenue}.  We can further classify these studies into two clusters: monopoly models with one operator, and oligopoly models with multiple operators.

References \cite{duan2010cognitive,duan2011cognitive,al2008stackelberg,yu2010pricing} studied monopoly models using the Stackelberg game formulation.
Daoud et al.~ in \cite{al2008stackelberg} proposed a profit-maximizing pricing strategy for uplink power control problem in wide-band cognitive radio networks. 
Yu et al.~ in \cite{yu2010pricing}
proposed a pricing scheme that can guarantee a fair and efficient power allocation among the secondary users.

References \cite{jia2008competitions,duan2011competition,ileri2005demand,niyato2008dynamics,elias2010joint,sengupta2007economic,gao2010map,jia2009revenue} looked at the oligopoly issues, either between two operators \cite{jia2008competitions,duan2011competition} or among many operators \cite{ileri2005demand,elias2010joint,niyato2008dynamics,sengupta2007economic,gao2010map,jia2009revenue}.
For the case of two operators,  Jia and  Zhang in \cite{jia2008competitions} proposed a non-cooperative two-stage game model to study the duopoly competition. Duan et al.~ in \cite{duan2011competition} formulated the economic interaction among the spectrum owner, two secondary operators and the users as a three-stage game.
For the case of many operators,  Ileri et al.~ in \cite{ileri2005demand} developed a non-cooperative game to model competition of operators in a mixed commons/property-rights regime under the regulation of a spectrum policy server.  Elias and Martignon in \cite{elias2010joint} showed that polynomial pricing functions lead to unique and efficient Nash equilibrium for the two-stage Stackelberg game between network operators and secondary users. Niyato  et al.~ in \cite{niyato2008dynamics} formulated an evolutionary game for modeling the dynamics of a multiple-seller, multiple-buyer spectrum trading market. In addition, several auction mechanisms were proposed to study the investment problems of cognitive network operators (\eg\cite{sengupta2007economic,gao2010map,jia2009revenue}).

All results mentioned above considered a rather static network model. In contrast, our work adopts a dynamic network model to characterize the stochastic nature of wireless networks. We will focus on a monopoly model in this paper.

In this paper, we use Lyapunov stochastic optimization to show the optimality and stability of the proposed profit maximizing control algorithms.
Several closely related previous results applying Lyaunov stochastic optimization to wireless networks include  \cite{huang2010optimality,urgaonkar2009opportunistic,lotfinezhad2010optimal}. Huang and Neely in \cite{huang2010optimality} considered revenue maximization problem for a conventional wireless access point without considering the cognitive radio technologies. Urgaonkar and Neely in \cite{urgaonkar2009opportunistic} and Lotfinezhad et al.~ in \cite{lotfinezhad2010optimal} studied cognitive radio networks based on a user-oriented approach, by designing joint scheduling and resource allocation algorithms to maximize the utility of a group of secondary users. Our paper focused on an operator-oriented approach to address profit maximization problem. In particular, we need to deal with the combinatorial problem of channel selection and channel assignment that usually leads to a high computational complexity. By discovering and utilizing the special problem structure, we design a low-complexity algorithm that is suitable for online implementation.

\section{System Model}
\label{sec:system_model}

Consider a C-MVNO that provides wireless communications services to its own secondary users by acquiring spectrum resource from some spectrum owner.
For example, Google may acquire spectrum from AT$\&$T to provide its own wireless services through the C-MVNO model. The spectrum owner's spectrum can be divided into two types: the \emph{sensing band} and the \emph{leasing band}.
In the sensing band, AT\&T serves its own primary users, but allows Google to identify available spectrum in this band through spectrum sensing without explicit communications with AT\&T. In the leasing band, AT\&T will does not allow spectrum sensing, and will lease the band to Google for economic returns.

More specifically, we consider a time-slotted OFDM system, where the C-MVNO serves the downlink transmissions from its base station to the secondary users. The system model is illustrated in Fig.~\ref{fig:model}. Secondary users randomly arrive at the secondary network and request files with random sizes to be downloaded from the base station.  This requested files are queued at the server in the base station until they are successfully transmitted to the requesting users.

\begin{figure}[hbt]
\centering
\includegraphics[scale=0.35]{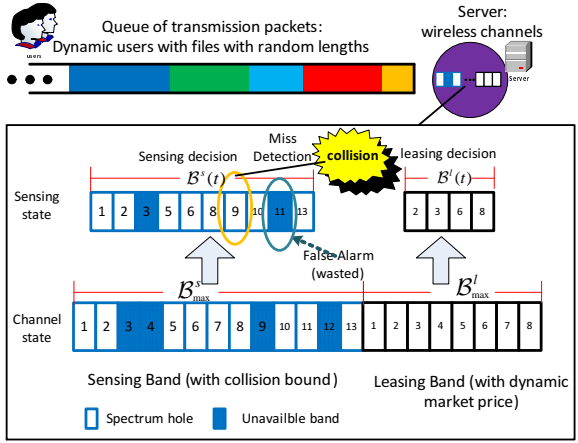}
\caption{Business model of the operator (Cognitive Virtual Network Operator).}
\label{fig:model}
\end{figure}

The rest of the section introduces each part of the system model in more details. The C-MVNO (or ``operator'' for simplicity) obtains wireless channels through spectrum sensing (Sections~\ref{sec:sensing} and \ref{sec:collision}) and spectrum leasing (Section~\ref{sec:leasing}), and allocates power over the obtained channels (Section~\ref{sec:power}). Secondary users dynamically arrive and request file downloading services (based on the demand model in Section~\ref{sec:demand_model}), and we model the requests as a queue (Section~\ref{sec:queuing}).

\subsection{Imperfect Spectrum Sensing}
\label{sec:sensing}
Sensing band $\mathcal{B}^s_{\max}\mdef\{1,\ldots,{B}^s_{\max}\}$ includes all channels that the spectrum owner allows sensing by the operator.\footnote{The operator will collect the sensing
information from a sensor network or geolocation database and provide it to its users, \ie  providing ``sensing as service''\cite{Weiss:2010p14709,geolocation}. This means that the network can accommodate legacy mobile devices without cognitive radio capabilities. For more detailed discussions, see\cite{duan2011cognitive}.}
We define \textbf{the state of a channel $i\in\mathcal{B}^{s}_{\max}(t)$ in time slot $t$} as $S_{i}(t)$, which equals 0 if channel \ensuremath{i} is busy (being used by a primary user), and equals 1 if channel \ensuremath{i} is idle.
%

We assume that $S_{i}(t)$ is an i.i.d. Bernoulli random variable, with an idle probability $p_{0}\in(0,1)$ and a busy probability $1-p_{0}$. This approximates the reality well if the time slots for secondary transmissions are sufficiently long or the primary transmissions are highly bursty \cite{anandkumar2010opportunistic}. (We will further study the general Markovian model in Section~\ref{sec:Markovian_model}.) We define the \textbf{{sensing state} of a channel $i\in\mathcal{B}^{s}_{\max}$ in time slot $t$} as $W_{i}(t)$, which equals to 0 if channel \ensuremath{i} is \emph{sensed} busy, and 1 if sensed idle.

{Notice that $W_{i}(t)$ may not equal to $S_{i}(t)$ due to imperfect sensing.
%
The accuracy of spectrum sensing depends on the sensing technology \cite{woyach2010can}.} If we denote $C^{s}$ as the sensing cost (per channel)\footnote{The cost corresponds to, for example, power or time used for sensing.}, then we can write the false alarm probability as $P_{fa}(C^s)\mdef Pr\{W_i=0|S_i=1\}$ (same for all channel $i$) and the missed detection probability as  $P_{md}(C^s)\mdef Pr\{W_i=1|S_i=0\}$ (same for all channel $i$).  Both functions are decreasing in $C^{s}$.
Intuitively, a better technology will have a higher cost $C^{s}$,  a lower false alarm probability $P_{fa}(C^s)$, and a lower missed detection probability $P_{md}(C^s)$. We denote all choices of cost $C^{s}$ (and thus the corresponding sensing technologies) by a finite set $\mc^s$.

As different channels have different conditions (to be explained in details in Section~\ref{sec:power}), the operator needs to decide which channels to sense at the beginning of each time slot. We use $\mathcal{B}^s(t)$ to denote \textbf{the set of channels sensed by the operator at time $t$}, which satisfies
\begin{equation}
\label{eq:sensing_bandwidth_con}
    \mathcal{B}^s(t)\subseteq\mathcal{B}^s_{\max},\forall\,t.
\end{equation}

\subsection{Collision Constraint}
\label{sec:collision}
Missed detections in spectrum sensing lead to transmission collisions with the primary users. We denote the \textbf{collision in channel $i\in\mathcal{B}^s_{\max}$ at time $t$} as a binary random variable $X_{i}(t)\in\{0,1\}$. We have  $X_i(t)\mdef (1-S_i(t))W_i(t)$, \ie  the collision happens if and only if the channel is busy but is sensed idle.

To protect  primary users' transmissions, the operator needs to ensure that the average collision in each channel $i$ does not exceed a tolerable level $\eta_i$ (measured in terms of  the average  number of collisions per unit time) specified by the spectrum owner. The tolerable level $\eta_i$ can be channel specific, since the primary users in different channels may have different QoS requirements.
We define the time-average number of collision in channel $i$ as $\overline{X_i} \overset{\Delta}{=}\lim_{t\rightarrow \infty}\frac{1}{t}\sum_{\tau=0}^{t-1}\expectation\,[X_i(\tau)]$. The collision constraints are
\begin{equation}
    \overline{X_i}\le \eta_i, \forall i\in\mathcal{B}^s_{\max}(t).
\end{equation}

\subsection{Spectrum Leasing with a Dynamic Market Price}
\label{sec:leasing}
A spectrum owner may have some channels that do not want to be sensed, for either privacy reasons or the fear of collisions due to sensing errors. However, these channels may not be always fully utilized. The spectrum owner can lease the unused part of these channels to the operator dynamically over time to earn more revenue.
Recall that we denote the set of these channels as the  leasing band $\mb^l_{\max}\mdef\{1,\ldots,B^l_{\max}\}$. (In general, we may represent it as $\mb^l_{\max}(t)$, since our model allows leasing band to be time-varying. For the simplicity of notations, we denote it as $\mb^l_{\max}$ whenever it is clear.)
We use $\mb^l_i(t)$ to denote \textbf{the set of channels leased by the operator at time $t$}, which satisfies
\begin{equation}
\label{eq:leasing_bandwidth_con}
    \mb^l(t)\subseteq \mb^l_{\max},\forall\,t.
\end{equation}
These channels will be exclusively used by the operator in the current time slot. We denote the leasing price per channel as $C^l(t)$, which stochastically changes according to the supply and demand relationship in the spectrum market (which might involve many spectrum owners and operators). It can be modeled by an exogenous (not affected by this particular operator's decisions) random process with countable discrete states and stationary distribution (not necessarily known by the operator).

\subsection{Power Allocation}
\label{sec:power}
In wireless network, there are usually channel fading due to  multipath propagation or shadowing from obstacles. To combat channel fading, it is necessary for the operator to do proper power allocation in both sensing channels and leasing channels to achieve satisfactory data rates.
For each channel $i\in\mb_{\max}\mdef \mb^s_{\max}\cup \mb^l_{\max}$, $h_i(t)$ represents its channel gain in time slot $t$ and follows an i.i.d. distribution over time.
Different channels have independent and possibly different channel gain distributions. We assume that secondary users are homogeneous and experience the same channel condition for the same channel. But channel conditions can be different in different channels.\footnote{This is the case where the users are located close by, and thus the downlink channel condition from the base station to the users is user independent.} (The heterogenous user scenario will be further discussed in Section~\ref{sec:MPMC_policy}.) The operator can measure $h_i(t)$ for each $i$ at the beginning of each slot $t$, but may not know the distributions.
Let $P_i(t)$ denote the \textbf{power allocated to channel $i$ at time $t$}. Since we consider a downlink case here, the operator needs to satisfy the\textbf{ total power constraint} $P_{\max}$ at its base station,
\begin{equation}
\label{eq:power_con}
    \sum_{i\in\mb_{\max}}P_i(t)\le P_{\max}, \forall t.
\end{equation}

In addition, for a channel $i\in\mathcal{B}^{s}_{\max}$ in the sensing band, we use the binary variable $I_{i}(t)\mdef S_i(t)W_i(t)$ to denote the \textbf{transmission result} of a secondary user, \ie $I_{i}(t)=1$ if successful (\ie $S_i(t)=1$ and $W_i(t)=1$)  and  $I_{i}(t)=0$ otherwise (either not sensed, or sensed busy, or sensed idle but actually busy). Based on the discussion of the leasing agreement, we have $I_i(t)\equiv 1$, $i\in\mb^{l}_{\max}$ for all channel in leasing band.
Then the rate in channel $i$ at time slot $t$ is (based on the  Shannon formula)
\begin{equation}
\label{eq:rate_fun}
    r_i(t)=  I_i(t)\log_2(1+h_i(t)P_i(t)),
\end{equation}
and \textbf{total transmission rate} obtained by the operator is
\begin{equation}
\label{eq:rate_expand}
    r(t)=\sum_{i\in\mb^s(t)\cap\mb^l(t)}r_i(t).
\end{equation}

Furthermore, we assume that the operator has a finite maximum transmission rate, \ie
$r(t)\le r_{\max}, \forall t$, under any feasible power allocation.

\subsection{Demand Model}
\label{sec:demand_model}
We will focus on elastic data traffic in this paper. Secondary users randomly arrive at the network to request files with random and finite file sizes (measured in the number of packets) from the operator. A user will leave the network once it has downloaded the complete requested file. The operator can price the packet transmission dynamically over time, which will affect the users' arrival rate.
For example, a higher price at peak time can refrain users from downloading files, as they can wait until a later time with a lower price. To model this, we use  $M(t)$ to denote the random \textbf{market state}, which can be measured precisely at the beginning of each time slot $t$ and can help estimate the users demand\footnote{For example, $M(t)$ can be users' willingness to pays, or whether the system is in peak time or off-peak time.}.  The random variable is drawn from a finite set $\mm$ over time in an i.i.d. fashion. The distribution of $M(t)$ may not be  known by the operator.

At a time $t$, the operator will decide whether to accept new file downloading requests from newly arrived secondary users. We define the \textbf{binary demand control variable as $O(t)$}, where $O(t)=1$ means that the operator accepts the incoming requests in time $t$, and $O(t)=0$ otherwise. When the operator decides to accept new requests of packet transmissions, it will also announce \textbf{a price $q(t)$ for transmitting one packet (to any user)}. This price will affect the users' incentives of downloading requests, \eg when price $q(t)$ is high, some users may choose to postpone their requests.

More precisely, we denote the number of incoming users at time $t$ as a discrete random variable $N(t)\mdef N\left(M(t),q(t)\right)\in \{0,1,2\dots\}$, the distribution of which is a function of the transmission price $q(t)$ and market state $M(t)$.
Further, a user $n$'s requested file size is denoted $L_{n}(t)$, with $n\in\{1,2,\dots,N(q(t))\}$, which is assumed to be independent of each other and does not depend on $q(t)$ or $M(t)$. Moreover, we assume that users are using a set $\mathcal{K}=\{1,2,\dots,K\}$ of different applications, and denote $\theta_k$ as the probability that an incoming user is using application $k\in \mk$ with $\sum_{k=1}^{K} \theta_k=1$. The distributions of the file length for different applications can be different, and we denote $l_k$ as the expected file length of application $k\in \mk$.

To summarize, \textbf{users' {instantaneous} demand at time $t$} is
\begin{equation}
A(t)\mdef\sum_{n=1}^{N\left(M(t),q(t)\right)}L_n(t),
\end{equation}
which is a random variable {due to random file sizes and the random number of incoming users} (even given $q(t)$ and $M(t)$). We define the \textbf{users' {(expected)} demand function} as $D(t)\mdef D\left(M(t),q(t)\right)\mdef \expectation\,[A\left(M(t),q(t)\right)]$, and its value is completely determined by $M(t)$ and $q(t)$. We can calculate that $D(M(t),q(t))=\expectation\,[N(M(t),q(t))]\sum_{k\in\mk}\theta_kl_k$. Then it is reasonable to assume that the operator can rather accurately characterize the expected number of incoming users $\expectation\,[N(M(t),q(t))]$ through long-term observations. Thus the demand function $D(M(t),q(t))$ is known by the operator.
We further assume that
the instantaneous demand is upper-bounded as $A(t)\le A_{\max}$ for all $t$, and that the demand function $D(t)$ is non-negative and non-increasing function of the price $q(t)$. When the price is higher than some upper-bound, \ie $q(t)\ge q_{\max}$, the demand function  $D(t)$ will be zero. The optimization of $O(t)$ and $q(t)$ based on the demand function will be further discussed in Section~\ref{sec:RevenueMaximization}.

\subsection{Queuing dynamics}
\label{sec:queuing}
Since we focus on the profit maximization problem in this paper, we will take a simple view of the network and model users' dynamic arrivals and departures as a single server queue.  When a user accesses the network, the corresponding file will be queued in a server at the base station, waiting to be transmitted to the user according to the First Come First Serve (FCFS) discipline.
Shama and Lin in \cite{sharma2011ofdm} showed that the single server queue model is a good approximation for an OFDM system, especially when the number of users and channels are large.

We denote the \textbf{queue length} (\ie the backlog, or the number of all packets from all queued files) at time $t$ as $Q(t)$. Thus the queuing dynamic can be written as
\begin{equation}
\label{eq:queue}
    Q(t+1)=\Big(\!Q(t)-r(t)\Big)^++ O(t)A(t),
\end{equation}
where $(a)^+\mdef\max(a,0)$, $r(t)$ and $A(t)$ are the transmission rate and incoming rate at time $t$, and $O(t)$ is the binary demand control variable (\ie $O(t)=1$ means the operator admit the users' transmission requests at time $t$).  Throughout the paper, we
adopt the following notion of \emph{queue stability}:
\begin{equation}
    \overline{Q}\mdef\underset{t\rightarrow\infty}{\lim\sup} \frac{1}{t}\sum_{\tau=0}^{t-1}\expectation\,[Q(\tau)]<\infty.
\end{equation}

\section{Problem Formulation}
\label{sec:problem_formulation}
For notation convenience, we introduce several condensed notations and use them together with the original notations.

We define $\phi(t)\mdef(M(t),{\boldsymbol{h}}(t),{C^l}(t))$ as \textbf{observable parameters}, including the market state $M(t)$, channel conditions (vector) ${\boldsymbol{h}}(t)$, and the leasing price ${C^l}(t)$ in the spectrum market. Based on previous assumptions, $\phi(t)$'s are i.i.d over time and take values from a finite set $\Phi$.

We define $\gamma(t)\mdef(O(t),q(t),{C^s}(t), \mb^s(t), \mb^l(t), \boldsymbol{P}(t))$ as \textbf{decision variables}, including the demand control variable $O(t)$, the transmission price for users $q(t)$, the sensing cost (with the corresponding sensing technology) ${C^s}(t)$, the set of sensing channels $\mb^s(t)$, the set of leasing channels $\mb^l(t)$, and power allocations (vector) $\boldsymbol{P}(t)$ of the operator.
We assume that $\gamma(t)$ takes values form a countable (finite or infinite) set $\Gamma_{\phi}(t)$, which is a Cartesian product of the feasible regions of all variables, \ie non-negative values satisfying constraints (\ref{eq:sensing_bandwidth_con}), (\ref{eq:leasing_bandwidth_con}), and (\ref{eq:power_con}). With the condensed notations, functions in this paper can be simply represented as functions of $\gamma(t)$ with parameter $\phi(t)$.

We further define the \textbf{instantaneous profit in time $t$}
\begin{align}
\label{eq:instantaneous_profit}
    R(t)&\mdef R(\gamma(t);\phi(t))\nonumber\\
&\mdef q(t)O(t)A(t)-C^s(t)|\mb^s(t)|-C^l(t)|\mb^l(t)|.
\end{align}
The \textbf{time average profit} is denoted as
$$\overline{R}\mdef{\underset{t\rightarrow\infty}{\lim\sup}\frac{1}{t}\sum_{\tau=0}^{t-1}\expectation\,[R(t)]}.$$  All expectations in this paper are taken with respect to system parameters $\phi(t)$ unless stated otherwise.

We look at the profit maximization problem through pricing determination and resource allocations. At the beginning of each time slot $t$, the operator observes the value of $\phi(t)$ and makes a decision $\gamma(t)$ to maximize the time average profit, subject to the system stability constraint (\ref{eq:queue_stable}) and the collision upper-bound requirement (\ref{eq:collision_con}). The Profit Maximization (PM) problem is formulated as
\begin{align}
\text{\textbf{PM}:}\;\;\text{Maximize}\;\;\;& \overline{R}\nonumber\\
\text{Subject to}\;\;\;& \overline{Q}<\infty,\label{eq:queue_stable}\\
&    \overline{X_i}\le \eta_i, i\in\mb^s_{max},\label{eq:collision_con}\\
\text{Variables}\;\;& \gamma(t)\in \Gamma_{\phi(t)}, \forall t,\nonumber\\
\text{Parameters}\;\;&\phi(t),\forall t.\nonumber
\end{align}
We represent its optimal solution as $\gamma^{*}(t)=\left(O^{*}(t),q^{*}(t),C^{s*}(t), \mb^{s*}(t), \mb^{l*}(t), \boldsymbol{P}^*(t)\right)$, and denote $\overline{R}^*$ as the maximum profit. {The PM problem is an infinite horizon stochastic optimization problem, which is in general hard to solve directly, especially when the distribution of dynamic parameter $\phi(t)$ is  unknown. For example, the future leasing price is hard to predict due to the dynamic supplies and demands in the market; and the primary users' activities can not be estimated precisely before hand.}

\section{Profit Maximization Control Policy}
\label{sec:PMC_policy}
Now, we adopt Lyapunov stochastic optimization technique to solve the PM problem.

\subsection{Lyapunov stochastic optimization}
We first introduce a virtual queue for constraint \eqref{eq:collision_con}, and then derive the optimal control policy to solve the PM problem through the technique of drift-plus-penalty function minimization \cite{neely2010stochastic}.

We denote $Z_i(t)$ as the number of collisions happening in sensing channel $i\in\mb^s_{\max}$.  The counter $Z_i(t)$ can be understood as a ``virtual queue'', in which the incoming rate is $X_i(t)$, and the serving rate is $\eta_i$ (the collision tolerant level).  The queue dynamic is
\begin{equation}
\label{eq:virtual_queue}
    Z_i(t+1)=\Big(Z_i(t)-\eta_i\Big)^++ X_i(t),
\end{equation}
with $Z_i(0)=0$.
By this notion, if the virtual queue is stable, then it implies that the average incoming rate is no larger than the average serving rate. This is just the same as the collision upper-bound constraint (\ref{eq:collision_con}).

We introduce the general queue length vector $\boldsymbol{\Theta}(t)\mdef\{Q(t), \boldsymbol{Z}(t)\}$.
We then define the Lyapunov function
\begin{equation*}
    L(\boldsymbol{\Theta}(t))\overset{\Delta}{=}\frac{1}{2}\Big[Q(t)^2+\sum_{i\in\mb^s_{\max}}Z_i(t)^2\Big],
\end{equation*}
and the Lyapunov drift
\begin{equation}
\label{eq:Lyapunov_drift}
\Delta(\boldsymbol{\Theta}(t))\overset{\Delta}{=}\expectation\,\Big[{L(\boldsymbol{\Theta}(t+1))-L(\boldsymbol{\Theta}(t))|\boldsymbol{\Theta}(t)}\Big].
\end{equation}

According to the Lyapunov stochastic optimization technique, we can obtain instantaneous control policy that can solve the PM problem though minimizing {some upper bound of} the following drift-plus-penalty function in every slot $t$:
\begin{equation}
\label{eq:drift_penality}
    \Delta(\boldsymbol{\Theta}(t))-V\expectation\,[R(t)|\boldsymbol{\Theta}(t)].
\end{equation}
There are two terms in the above function. The first term is the Lyapunov drift defined in \eqref{eq:Lyapunov_drift}. It is shown by Lyapunov stochastic optimization \cite{neely2010stochastic} that we can achieve the system stabilities (\ie constraints \eqref{eq:queue_stable} and \eqref{eq:collision_con} of the PM problem) by {showing the existence of a constant upper bound for the drift function.} The second term in \eqref{eq:drift_penality} is just the objective of the PM problem, \ie to minimize the minus profit, which is equivalent to maximize the profit. Here parameter $V$ is introduced to achieve the desired tradeoff between profit and queuing delay in the control policy.
{We first find an upper bound for (\ref{eq:drift_penality}).}

By the queue dynamic (\ref{eq:queue}), we have
\begin{align}
    &Q(t\!+\!1)^2\le \left(Q(t)\!-\!r(t)\right)^2\!+\!A(t)^2\! +\! 2Q(t)O(t)A(t)\nonumber\\
&\quad= Q(t)^2\!+\!r(t)^2 +A(t)^2\!+\! 2Q(t)\!\left(O(t)A(t)\!-\!r(t)\right).
\label{eq:queue_expand}
\end{align}
Similarly, for virtual queue (\ref{eq:virtual_queue}), we have
\begin{align}
\label{eq:virtual_queue_expand}
Z_i(t\!+\!1)^2\le Z_i(t)^2\!+ \!\eta_i^2\!+\!X_i(t)^2\!+\!2Z_i(t)(X_i(t)\!-\!\eta_i).
\end{align}

\rev{
Substituting (\ref{eq:queue_expand}) and (\ref{eq:virtual_queue_expand}) into  (\ref{eq:drift_penality}), we have
\begin{align*}
&\Delta(\boldsymbol{\Theta}(t))-V\expectation\,[R(t)|\boldsymbol{\Theta}(t)]\le D-D_1(t)\nonumber\\
&-V\expectation\left[q(t)O(t)A(t)\!-\!C^s(t)|\mb^s(t)|\!-\!C^l(t)|\mb^l(t)||\boldsymbol{\Theta}(t)\right]\nonumber\\
&+Q(t)\expectation\left[O(t)A(t)-r(t)|\boldsymbol{\Theta}(t)\right] +\sum_{i\in\mb^s_{\max}}\!Z_i(t)\expectation\left[X_i(t)|\boldsymbol{\Theta}(t)\right]
\end{align*}
where $D$ is a positive constant satisfying the following condition for all $t$,
\begin{align*}
    D\!\ge\!\frac{1}{2} \!\left(\!\expectation\Big[r(t)^2\!+\! (O(t)A(t))^2|\boldsymbol{\Theta}(t)\Big]\!\!  + \!\!\!\!\!\!\sum_{i\in\mb^s_{\max}}\!\!\!\!\!\expectation\Big[X_i(t)^2\!+\! \eta_i^2|\boldsymbol{\Theta}(t)\Big]\!\!\right)\!\!\!\!,
\end{align*}
%
and $D_1(t)\mdef\sum_{i\in\mb^s_{\max}} Z_i(t)\eta_i$
 is a known constant at time $t$, since the values of $Z_i(t)$s are known  at time $t$. 

To further simplify the above expression,
we introduce two new notations: channel selection $\mb(t)\mdef\mb^s(t)\cup \mb^l(t)$, and channel cost
\begin{equation}\label{eq:virtual_cost}
C_i(t)\!\mdef\!
\begin{cases}
\widetilde{C^s}(t),&\text{if }i\in\mb^s(t), \\
C^l(t),      &\text{if }i\in\mb^l(t),
\end{cases}
\end{equation}
where $\widetilde{C^s}(t)$ is the virtual sensing cost and is defined as $\widetilde{C^s}(t)\mdef C^s(t)+(1/V)Z_i(t)\expectation\left[X_i(t)|\boldsymbol{\Theta}(t)\right]$. 
Note that this virtual sensing cost depends not only on the sensing cost but also on the collision history in this channel. More frequent past collisions in this channel will increase the virtual sensing cost, hence makes the operator more conservative about choosing this sensing channel.
It then follows:
\begin{align}
\label{eq:pmc}
&\Delta(\boldsymbol{\Theta}(t))-V\expectation\,[R(t)|\boldsymbol{\Theta}(t)]\le D-D_1(t)\nonumber\\
&\quad\quad\quad+V\expectation\left[\left.\left(\frac{Q(t)}{V}-q(t)\right)O(t)A(t)\right|\boldsymbol{\Theta}(t)\right]\nonumber\\
&\quad\quad\quad+V\expectation\left[\left.\sum_{i\in\mb(t)} \!C_i(t)-\frac{Q(t)r_i(t)}{V}\right|\boldsymbol{\Theta}(t)\right],
\end{align}
where we use the fact that collisions between secondary and primary users can only happen in  channels that are chosen for sensing, \ie  $i\in\mb^s(t)$.
%
Next we propose the Profit Maximization Control (PMC) policy to minimize the right hand side of inequality (\ref{eq:pmc}) for each time $t$.
}

\subsection{Profit Maximization Control (PMC) policy}
\label{sec:pmc}
It is clear that minimizing the right hand side of \eqref{eq:pmc} is equivalent to minimizing the last two terms in \eqref{eq:pmc}. Note that the last two terms are decouple in decision variables, thus we have the two parallel parts in the PMC policy as follows:

\subsubsection{Revenue Maximization} \label{sec:RevenueMaximization}

Here we determine two variables: the transmission price $q(t)$ and the market control decision $O(t)$. The optimal transmission price $q(t)$ is obtained by solving the following revenue maximization problem:
\begin{align}
\text{Maximize}\;& q(t)D\left(q(t),M(t)\right)\!-\!\frac{Q(t)}{V}D\left(q(t),M(t)\right)\label{eq:pricing}\\
\text{Variables}\;& q(t)\ge 0\nonumber
\end{align}

To obtain the above problem formulation of revenue maximization, we use the fact that the demand function $D(M(t),q(t))\mdef \expectation\,[A(t)]$, which is independent of the queuing states of the system.

Note that the first term in (\ref{eq:pricing}) is just the revenue that the operator collects from its users. The second term can be viewed as a shift of the queuing effect, which is introduced by the Lyapunov drift for system stability.

If the maximum objective in (\ref{eq:pricing}) (under the optimal choice of $q(t)$) is positive, the operator sets the demand control variable $O(t)=1$ and accepts the present incoming requests $A(t)$ at the price $q(t)$. Otherwise, the operator sets $O(t)=0$ and rejects any new requests.

\subsubsection{Cost Minimization}
 We determine channels selection $\mb(t)$, sensing technology (or cost) $C^s(t)$, and power allocation $\boldsymbol{P}(t)$, by solving the following optimization problem to control the costs of the operator to provide transmission services to its users.
\begin{align}
\text{Minimize}& \sum_{i\in \mb(t)}\!\! C_i(t)- \frac{Q(t)}{V}\expectation\,[r_i(t)|\boldsymbol{\Theta}]\label{eq:r_problem}\\
\text{Subject to}&\quad (\ref{eq:sensing_bandwidth_con}), (\ref{eq:leasing_bandwidth_con}), (\ref{eq:power_con}) \nonumber\\
\text{Variables}&\quad C^s(t),\mb^s(t),\mb^l(t), P_i(t)\ge 0\nonumber
\end{align}

To obtain the above problem formulation of cost minimization, we use the fact that $X_i(t)=(1-S_i(t))W_i(t)$, which is independent of the queuing state. Thus the virtual sensing cost can be updated as $\widetilde{C^s}(t)=C^s(t)+(1/V)Z_i(t)(1-p_0)P_{md}(C^s(t))$, which increases with the virtual queue and missed detection probability.

Note that the first term in the summation in \eqref{eq:r_problem} is the cost of each channel. The second term in the summation is a queuing-weighted expected transmission rate, again is a shift introduced by Lyapunov drift for system stability. This shift can be also viewed as the ``gain'' collected from the channel to help clear the queue.

\subsubsection{Intuitions behind the PMC policy}
We discuss some intuitions behind the PMC policy. To maximize the profit, the operator needs to perform revenue maximization and cost minimization. To guarantee the queuing stability, some shifts (\ie all queue-related terms) are introduced by the Lyapunov drift in these problems. {In Appendix Section~\ref{sec:queuing_effect}}, we show that the queueing effect will increase the optimal price announced by the operator (comparing with not considering queueing), as a higher price will reduce the users' demands and maintain the system stability.

The Lyapunov stochastic optimization approach provides a way to decompose a long-term average goal (\eg the PM problem) into instantaneous optimization problems (\eg revenue maximization and cost minimization problems in the PMC policy). In the stochastic optimization problem, the current decisions always have impacts on the future problems. These impacts are characterized and incorporated by the queueing shift terms  in the instantaneous optimization problems. Therefore, we can achieve the long term goal through focusing on the instantaneous decisions in every time slot.  The flowchart for the PMC policy is illustrated in Fig.~\ref{fig:flowchart}.
\begin{figure}[hbt]
\centering
\includegraphics[scale=0.43]{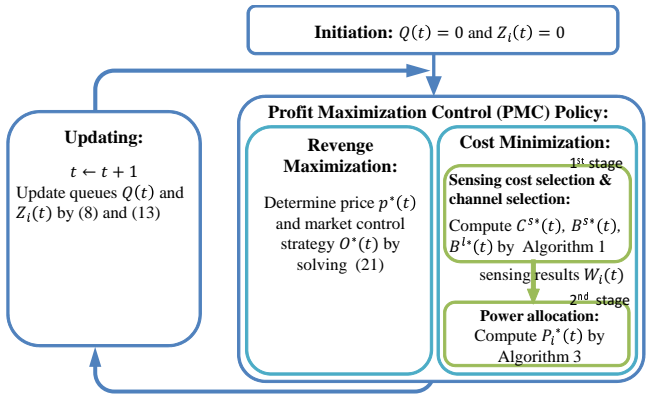}
\caption{Flowchart of the dynamic PMC policy}
\label{fig:flowchart}
\end{figure}
%

{Although the revenue maximization problem is relatively easy to solve,} the cost minimization problem is very complicated. It is actually a two-stage decision problem. In the first stage, the operator determines the sensing technology, and chooses which channels to sense and which channels to lease. Then spectrum sensing is performed to identify available channels. With this information, the operator further allocates downlink transmission power in the available channels (sensed idle ones and leasing ones). In Section~\ref{sec:cost_min}, we focus on designing algorithms to solve the cost minimization problem.

\subsection{Algorithms for Cost Minimization Problem}
\label{sec:cost_min}
Now we use backward induction to solve the cost mimmation problem.
\subsubsection{The Second Stage Problem}
\label{sec:2nd_stage}
We first analyze the power allocation in the second stage, where the sensing results $W_i(t)$, the channel selection $\mb^s(t)$, and the sensing technology $C^s(t)$  have been determined. Therefore, the power allocation problem of  (\ref{eq:r_problem}) is as follows:
\begin{align}
\text{Maximize}&\quad \sum_{i\in \mb(t)} \omega_i(t)\log \left(1+h_i(t)P_i(t)\right)\label{eq:power_allocation_problem}\\
\text{Subject to}&\quad     \sum_{i\in\mb_{\max}}P_i(t)\le P_{\max}, \nonumber\\
\text{Variables}&\quad P_i(t)\ge 0\nonumber
\end{align}
where
\begin{equation}
\label{eq:omega}
\omega_i(t)\mdef
  \begin{cases}
   \omega_0\mdef\expectation\,[S_i(t)|W_i(t)=1],&\text{if }i\in\mb^s(t), \\
   1,       &\text{if }i\in\mb^l(t).
  \end{cases}
\end{equation}
and we can calculate
\begin{align}
&\expectation\,[S_i(t)|W_i(t)=1]\nonumber\\
&=\frac{p_0(1-P_{fa}(C^s(t)))}{p_0(1-P_{fa}(C^s(t)))+(1-p_0)P_{md}(C^s(t))}.
\end{align}
By using the Lagrange duality theory, we can show that the problem (\ref{eq:power_allocation_problem}) has the following optimal solution
\begin{equation}
\label{eq:power_allocation}
P_i^*(t)=
\begin{cases}
\omega_i(t)\left(\frac{1}{\lambda(t)}-\frac{1}{\omega_i(t)h_i(t)}\right)^+& i\in\mb(t),\\
0 & i\notin\mb(t),
\end{cases}
\end{equation}
where $\lambda(t)$ is the Lagrange multiplier of the total power constraint (\ref{eq:power_con}). The optimal value of $\lambda(t)$ is the following water filling solution,
\begin{equation}
\label{eq:dual_sol}
\lambda(t)= \frac{\sum_{i\in \mb_{p}(t)}\omega_i(t)}{P_{\max}+\sum_{i\in\mb_{p}(t)}\frac{1}{h_i(t)}},
\end{equation}
where $\mb_{p}(t)\mdef\{i\in\mb(t):P_i(t)>0\}$. Note that (\ref{eq:dual_sol}) is a fixed-point equation of $\lambda(t)$, and the precise value of $\lambda(t)$ is not given here.

When the values of all parameters (\ie $h_i(t),\omega_i(t)$) are given, we can use a simple water level searching Algorithm~\ref{alg:power_allocation}  and similar as the searching algorithms in \cite{huang2009downlink,palomar2005practical}) to determine the exact optimal value of $\lambda(t)$.
In the following pseudo code of Algorithm~\ref{alg:power_allocation}, we define a function $\Lambda(m)$ as follows:
$$\Lambda(m)\mdef \frac{\sum_{i=1 }^m\omega_i}{P_{\max}+\sum_{i=1}^m\frac{1}{h_i}}.$$

\begin{algorithm}[hbt]
\caption{Power Allocation}
\label{alg:power_allocation}
\begin{algorithmic}[1]
\State Rearrange the channel indices $i\in\mb_{\max}$ as a decreasing order of $\omega_i(t)h_i(t)$
\State $m\gets |\mb(t)|$, $\lambda\Leftarrow\Lambda(m)$ 
\While{$\lambda\ge h_m(t)\omega_m(t)$}
	\State $m\gets m-1$		
	\State $\lambda\Leftarrow\Lambda(m)$
\EndWhile
\end{algorithmic}
\end{algorithm}

The main complexity in this algorithm is to sort the channels according to the channel gains. We can adopt established sorting algorithms \cite{knuth1973art} to obtain the index rearrangement with a complexity $\mathcal{O}(|\mb_{\max}|\log(|\mb_{\max}|))$. Thus the total complexity of Algorithm~\ref{alg:power_allocation}  is $\mathcal{O}(|\mb_{\max}|^3\log(|\mb_{\max}|))$.

\subsubsection{The First Stage Problem}
Let us consider the first stage problem to determine the sensing technology, the sensing set, and the leasing set. Note that since the sensing has not been performed at this stage yet, thus the sensing result $W_i(t)$ is not known.
We denote
\begin{equation}
    \alpha_i(t)=\begin{cases}
\alpha_0\mdef\expectation\,[S_i(t)W_i(t)]& \text{if } i\in\mb^s(t)\\
1 & \text{if } i\in\mb^l(t),
\end{cases}
\end{equation}
and we can calculate
\begin{align}
&\expectation\,[S_i(t)W_i(t)]=p_0(1-P_{fa}(C^s(t)))
\end{align}

Substitute the optimal power allocation (\ref{eq:power_allocation}) into the problem (\ref{eq:r_problem}), we have
\begin{align}
\text{Minimize}& \sum_{i\in \mb(t)}\!\! C_i(t)\!-\! \frac{Q(t)}{V}\alpha_i(t)\!\left(\!\log\!\left(\frac{h_i(t)\omega_i(t)}{\lambda(t)}\!\right)\!\!\right)^+\label{eq:r_problem_2}\\
\text{Subject to}&\quad \mb^s(t)\subseteq \mb^s_{\max},\; \mb^l(t)\subseteq \mb^l_{\max},\; C^s(t)\in\mc \nonumber\\
\text{Variables}&\quad\mb^s(t),\mb^l(t), C^s(t)\nonumber
\end{align}
We first consider the above problem for a fixed sensing cost $C^s(t)$. This problem is a combinatorial optimization problem of $\mb^s(t)$ and $\mb^s(t)$. The worst case of searching complexity (\ie exhaustive searching) can be $\mathcal{O}(2^{|\mb_{\max}|})$, exponential in the number of total channels.

However, we can reduce the complexity by exploring the special structure of this problem.

\begin{proposition}
\label{pro:threshold}
(Threshold Property)
\begin{itemize}
  \item We rearrange the leasing channel indices $i\in\mb^l_{\max}$ in the decreasing order of $g_i(t)$, which is defined as
\begin{equation}
\label{eq:virtual_channel_gain_l}
    g_i(t)\mdef h_i(t)\exp\left(-\frac{C_i(t)}{\frac{Q(t)}{V}}\right).
\end{equation}
There exists a threshold index $i^l_{th}$, such that a channel $i$ is chosen for leasing (\ie $i\in\mb^l(t)$) if and only if $i\le i^l_{th}$.
  \item We rearrange the sensing channel indices $j\in\mb^s_{\max}$ in the decreasing order of $g_j(t)$, which is defined as
\begin{equation}
\label{eq:virtual_channel_gain_s}
    g_j(t)\mdef \omega_0h_j(t)\exp\left(-\frac{C_j(t)}{\frac{Q(t)}{V}\alpha_0}\right).
\end{equation}
For all leasing channels $j\in\mb^s_{\max}$, there exists a threshold index $j^s_{th}$, such that a channel $j$ is chosen for sensing (\ie $j\in\mb^s(t)$) if and only if $j\le j^s_{th}$.
\end{itemize}
\end{proposition}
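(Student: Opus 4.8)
The plan is to prove the threshold property by an exchange argument, after first reducing the first-stage objective (\ref{eq:r_problem_2}) to a form in which the ranking variables $g_i(t)$ appear explicitly. The whole argument is carried out for a fixed sensing cost $C^s(t)$, and the two bullets are handled symmetrically (a swap within the leasing band, holding the sensing set fixed, and vice versa). First I would argue that in any optimal solution every selected channel receives strictly positive power: if some $i\in\mb(t)$ had $P_i^*(t)=0$, then by (\ref{eq:power_allocation}) its rate term in (\ref{eq:r_problem_2}) vanishes while it still incurs the cost $C_i(t)$, so dropping it from $\mb(t)$ strictly lowers the objective, a contradiction. Hence the operator $(\cdot)^+$ in (\ref{eq:r_problem_2}) is inactive on an optimal $\mb(t)$, and each selected channel $i$ contributes $C_i(t)-\frac{Q(t)}{V}\alpha_i(t)\log\!\big(h_i(t)\omega_i(t)/\lambda(t)\big)$.

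Next I would substitute the definitions (\ref{eq:virtual_channel_gain_l}) and (\ref{eq:virtual_channel_gain_s}). A direct calculation shows that for both a leasing channel ($\alpha_i=1,\omega_i=1$) and a sensing channel ($\alpha_i=\alpha_0,\omega_i=\omega_0$) the cost-and-gain part of the contribution collapses to $-\frac{Q(t)}{V}\alpha_i(t)\log g_i(t)$, so that the objective can be rewritten as
\begin{equation*}
-\frac{Q(t)}{V}\sum_{i\in\mb(t)}\alpha_i(t)\log g_i(t)+\frac{Q(t)}{V}\Big(\sum_{i\in\mb(t)}\alpha_i(t)\Big)\log\lambda(t).
\end{equation*}
The point of this rewriting is that for a \emph{fixed} water level $\lambda(t)$ the objective decouples across channels, and the marginal effect of including channel $i$ is $\frac{Q(t)}{V}\alpha_i(t)\log\!\big(\lambda(t)/g_i(t)\big)$, which is negative precisely when $g_i(t)>\lambda(t)$. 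Since the channels in each band are sorted in decreasing order of $g_i(t)$, the set $\{i:g_i(t)>\lambda(t)\}$ is an initial segment, which is exactly the claimed structure, with $i^l_{th}$ and $j^s_{th}$ located where $g_i(t)$ crosses $\lambda(t)$.

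The main obstacle is that $\lambda(t)$ is \emph{not} a free parameter: it is pinned to the selected set through the water-filling fixed point (\ref{eq:dual_sol}), so the fixed-$\lambda$ decomposition above is suggestive but not by itself conclusive. To close this gap I would run an exchange argument on an assumed optimal set. Suppose, for contradiction, that $a$ is selected and $b$ is not, with $a,b$ in the same band and $g_b(t)>g_a(t)$, and form the swapped set. Because the two channels carry identical band weights, the swap leaves both $\sum_i\alpha_i(t)$ and $\sum_i\omega_i(t)$ unchanged, so the separable term strictly improves by $\frac{Q(t)}{V}\alpha_i(t)\log\!\big(g_b(t)/g_a(t)\big)>0$, and the only residual effect is the perturbation of $\log\lambda(t)$ coming from the change in $\sum_i 1/h_i(t)$.

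The crux — and the step I expect to be hardest — is to show that this residual water-level term cannot overturn the gain. I would control it using the monotonicity of the water-filling level as channels are added to or removed from the set, together with the positive-power conditions $\omega_i(t)h_i(t)>\lambda(t)$ from the first step, which bound the magnitude of the $\lambda(t)$-perturbation relative to the rate gain. The delicate bookkeeping is that in the sensing band the rate is weighted by $\alpha_0=\expectation\,[S_iW_i]$ while the water level is governed by $\omega_0=\expectation\,[S_i\mid W_i=1]\ge\alpha_0$; keeping track of this mismatch while carrying the water-level bound through the comparison is where the real work lies, and it is the part I would isolate as a separate lemma before assembling the final contradiction.
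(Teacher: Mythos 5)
Your reduction of the per-channel contribution to $\frac{Q(t)}{V}\alpha_i(t)\log\bigl(\lambda(t)/g_i(t)\bigr)$ is correct, and it is exactly the paper's equivalent form (\ref{eq:r_problem_3}); you also correctly identify the crux that the paper's own proof passes over in silence, namely that $\lambda(t)$ is pinned to the selected set by the fixed point (\ref{eq:dual_sol}), so that ``sort by $g$ and cut at $\lambda$'' is not yet an argument. The problem is that your proposal stops exactly there: the entire burden of the proof is placed on the final ``separate lemma'' asserting that the perturbation of $\log\lambda(t)$ caused by a swap cannot overturn the separable gain $\frac{Q(t)}{V}\alpha\log\bigl(g_b(t)/g_a(t)\bigr)$, and that lemma is never proved. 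As written, this is a proof plan whose hardest step is missing, not a proof.

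Moreover, that missing step cannot be supplied, because the inequality you need is false; within the sensing band the threshold claim itself fails without further assumptions. Sensing channels carry channel-dependent virtual costs $C_j(t)=C^s(t)+(1/V)Z_j(t)(1-p_0)P_{md}(C^s(t))$, so $g_b(t)>g_a(t)$ does not force $h_b(t)>h_a(t)$ and $C_b(t)\le C_a(t)$ jointly. Concretely, take $C^s=0$ with $P_{fa}=P_{md}=1/2$ and $p_0=0.6$ (so $\omega_0=0.6$, $\alpha_0=0.3$), $Q(t)/V=10/3$ (so $\frac{Q(t)}{V}\alpha_0=1$), $P_{\max}=1$, and two sensing channels with $h_a=1$, $C_a=0$ (i.e., $Z_a=0$) and $h_b=e^{2}$, $C_b=1.9$ (i.e., $Z_b$ large). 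Then $g_a=0.6$ and $g_b=0.6\,e^{0.1}>g_a$, yet the objective (\ref{eq:r_problem_2}) evaluates to $-\log 2\approx -0.69$ for $\{a\}$, $\approx -0.23$ for $\{b\}$, and $\approx -0.23$ for $\{a,b\}$ (water level $\lambda=1.2/(2+e^{-2})$): the unique optimum selects the lower-$g$ channel and excludes the higher-$g$ one, so no threshold in the $g$-order exists. The mechanism is precisely the one you feared: a cheap, weak channel standing alone enjoys a low water level $\lambda=\omega_0/(P_{\max}+1/h_a)$, while an expensive, strong channel raises $\lambda$; the ranking by $g$ compares channels at a \emph{common} water level and does not survive the set-dependence of $\lambda$. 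Your exchange argument does close for the leasing band, since all leasing channels share the common cost $C^l(t)$, so the $g$-order is the $h$-order and a swap that reuses the incumbent power allocation raises the rate term while leaving the cost term unchanged, with no need to control $\lambda$ at all. But for the sensing band --- which is where you yourself located ``the real work'' --- no bookkeeping with water-level monotonicity can rescue the step. Note that the paper's own two-line proof (establish the selection condition (\ref{eq:set_selection_sol}), rewrite as (\ref{eq:r_problem_3}), then ``by the $\log$ function \ldots the threshold property immediately follows'') has exactly the same hole: it implicitly treats $\lambda(t)$ as common to all candidate sets. Your diagnosis of the weak point is therefore more honest than the paper's argument, but neither your plan nor the paper's proof actually fills the gap, and in the generality stated the gap is not fillable.
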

\begin{proof}
For each channel in the optimal channel selection set $i\in{\mb^*}(t)$, it satisfies the following condition
\begin{equation}
\label{eq:set_selection_sol}
C_i(t)\!\le\!\frac{Q(t)}{V}\!\alpha_i(t)\!\!\left(\log\left(\frac{\omega_i(t)h_i(t)}{\lambda(t)}\right)\right)^+.
\end{equation}
This result is easy to see from the objective function in (\ref{eq:r_problem_2}):
to optimize the profit, we should only pick the channel with its cost no larger than its gain.
Thus by \eqref{eq:virtual_channel_gain_l} and \eqref{eq:virtual_channel_gain_s}, the optimization problem in (\ref{eq:r_problem_2}) can be written in the following equivalent form:
\begin{align}
\text{Maximize}&\sum_{i\in \mb^l(t)}\!\!\!\! \left(\!\!\log\!\left(\!\frac{g_i(t)}{\lambda(t)}\!\right)\!\right)^+ \!\!\!\! + \!\alpha_0\!\!\!\! \!\!\sum_{j\in \mb^s(t)}\!\!\!\!\left(\!\!\log\!\left(\frac{g_j(t)}{\lambda(t)}\right)\!\right)^+\label{eq:r_problem_3}\\
\text{Subject to}&\quad \mb^s(t)\subseteq \mb^s_{\max},\quad \mb^l(t)\subseteq \mb^l_{\max} \nonumber\\
\text{Variables}&\quad C^s(t),\mb^s(t),\mb^l(t)\nonumber
\end{align}
Thus by the $\log$ function in the objective of \eqref{eq:r_problem_3}, the threshold property immediately follows.
\end{proof}

This proposition suggests that we should select the channel with a large $g_i$ (for leasing channels) or $g_j$ (for sensing channels). Note that as defined in \eqref{eq:virtual_channel_gain_l} and \eqref{eq:virtual_channel_gain_s}, $g_i$ and $g_j$ are equal to channel information (\ie $h_i$ for leasing channels, $\omega_jh_j$ for sensing channels) multiplying a decaying factor related to the channel cost.  They can be understood as \emph{virtual channel gains} by taking channel costs into consideration. A large value of $g_i$ or $g_j$ means that the channel is cost-effective, \ie the channel has a good channel gain as well as a low cost.

\rev{By Proposition~\ref{pro:threshold},  it is clear that we can obtain the optimal channel selection by an exhaustive search of the optimal sensing and leasing thresholds. Algorithm~\ref{alg:optimal_channel} gives a pseudo code for the searching procedure.}
\begin{algorithm}[ht]
\caption{Optimal Channel Selection (for a given $C^s(t)$) }
\label{alg:optimal_channel}
\begin{algorithmic}[1]
\Procedure {Computing $\mb$} {$C^s(t)$}
    \State \textbf{invoke} procedure SearchingThreshold($C^s$) to calculate $\Upsilon^l$ and $\Upsilon^s$
	\State $U(C^s(t))\gets 0$
    \For{$i=0,1\dots,\Upsilon^l$}
			\For{$j=0,1\dots,\Upsilon^s$}
				\State Calculate $\lambda(t)$ as (\ref{eq:dual_sol}) with $\mb_p(t)=\mb^l_i\cup\mb^s_j$
				\If {$g_i(t)>\lambda(t)$ and $g_j(t)>\lambda(t)$}
					\State Calculate $U(i,j)$
					\If{$U(i,j)<U(C^s(t))$}
						\State $U(C^s(t))\gets U(i,j)$
						\State $\mb(C^s(t))\gets \mb^l_i\cup\mb^s_j$
					\EndIf
				\EndIf
			\EndFor
    \EndFor
\EndProcedure
\end{algorithmic}
\end{algorithm}

In Algorithm \ref{alg:optimal_channel},  $U(i,j)$ denotes the optimal value of (\ref{eq:r_problem_3}) with the channel selection set $\mb=\mb^l_i\cup\mb^s_j$.
To decrease the number of searching loops, we can first run Algorithm~\ref{alg:water-filling} \rev{(in Appendix~\ref{sec:search_threshold})} to determine the maximum possible thresholds $\Upsilon^l$ for leasing channels or $\Upsilon^s$ for sensing channels. (If we do not run Algorithm~\ref{alg:water-filling} , we can just set $\Upsilon^l=|\mb^l_{\max}|$ and $\Upsilon^s=|\mb^s_{\max}|$. Whether we run Algorithm~\ref{alg:water-filling} or not, the complexity of Algorithm \ref{alg:optimal_channel} is no worse than $\mathcal{O}(|\mb^s_{\max}|\times |\mb^l_{\max}|)$.) Thus the searching complexity is reduced to $\mathcal{O}(|\mb_{\max}|^2)$, given the channel indices are rearranged as in the Proposition~\ref{pro:threshold}.  We can adopt established sorting algorithms \cite{knuth1973art} to obtain the index rearrangement with a complexity $\mathcal{O}(|\mb_{\max}|\log(|\mb_{\max}|))$. Thus the total complexity of finding the optimal channel selection is $\mathcal{O}(|\mb_{\max}|^3\log(|\mb_{\max}|))$.

%


{Note that in real systems, the channel conditions and the leasing cost may not change as frequently as every time slot. We usually can update these network parameters every time frame (which is composed by several time slots instead of one time slot). Accordingly, the above algorithm will also be operated based on the time frames, which will greatly reduce the computation complexity in practice.}

Furthermore, let us find the optimal sensing cost $C^s(t)$ by enumerating all possible sensing costs $C^s(t)\in\mc^s$. For the sensing cost $C^s(t)$, we denote the objective value in (\ref{eq:r_problem_3}) as $U(C^s(t))$ and the optimal channel selection set as $\mb(C^s(t))$. The corresponding pseudo code is given in Algorithm \ref{alg:sensing_cost_search}, the complexity of which is $\mathcal{O}(|\mathcal{C}|\times|\mb_{\max}|^3\log(|\mb_{\max}|))$.

\begin{algorithm}[ht]
\caption{Optimal Sensing Cost and Channel Selection}
\label{alg:sensing_cost_search}
\begin{algorithmic}[1]
	\State $U^*\gets 0$
    \For{$C^{s}(t)\in\mathcal{C}^s$}
		\State Determine the optimal channel selection $\mb(C^s(t))$ (see Algorithm~\ref{alg:optimal_channel})
		\State Calculate $U(C^s(t))$
			\If{$U^*> U$}
				\State $U^*\gets U$, $C^{s*}(t)\gets C^s(t)$, $\mb^{*}(t)\gets \mb(C^s(t))$
			\EndIf
    \EndFor
\end{algorithmic}
\end{algorithm}

So far, we have completely solved the two-stage optimization problem in (\ref{eq:r_problem}). For each time $t$, the operator first runs Algorithm \ref{alg:sensing_cost_search} to choose the channel sets $\mb^{*}(t)=\mb^{s*}(t)\cup\mb^{l*}(t)$. Then it uses the sensing technology with a cost $C^{s*}(t)$ to sense channels in $\mb^{s*}(t)$. Based on the sensing results, it further runs Algorithm \ref{alg:power_allocation} to determine the power allocation $P_i^*(t),i\in\mb^{*}(t)$.

\subsubsection{Sensing vs. Leasing}
We are also interested in how the PMC policy makes the best tradeoff between sensing and leasing based on the sensing cost $C^s(t)$ and the leasing cost $C^l(t)$.
To make the comparison easy to understand, we will consider perfect sensing with no sensing errors (\ie $\omega_0=1$ and $\alpha_0=p_0$). We will further assume that a leasing channel $i$ and a sensing channel $j$ have the same channel gain $h_i(t)=h_j(t)$. Finally, {we assume that two channels have the same availability-price-ratio,} \ie the costs satisfy $C^s(t)=\alpha_0 C^l(t)$. We want to answer the following question: \emph{is the PMC policy indifferent in choosing either of the two channel?}

{By \eqref{eq:virtual_channel_gain_l} and \eqref{eq:virtual_channel_gain_s}, we have $g_i=g_j$ for these two channels.
By (\ref{eq:r_problem_3}), we can calculate the net gains by channel $i$ and $j$: $\left(\log\left(\frac{g_i(t)}{\lambda(t)}\right)\right)^+\ge \alpha_0\left(\log\left(\frac{g_j(t)}{\lambda(t)}\right)\right)^+$. To maximize the objective in (\ref{eq:r_problem_3}), it is clear that PMC policy will prefer the leasing channel $i$ over the sensing channel $j$, and this tendency increases as $\alpha_0$ decreases. If we view the channel unavailability $(1-\alpha_0)$ as the risk of choosing the sensing channel, then the PMC policy is a risk averse one. This is mainly due to the concavity of the rate function.
This preference order will also hold in the imperfect sensing case, in which case we will have $g_i>g_j$ and $\left(\log\left(\frac{g_i(t)}{\lambda(t)}\right)\right)^+\ge\alpha_0\left(\log\left(\frac{g_j(t)}{\lambda(t)}\right)\right)^+$.}

\subsection{Performance of the PMC Policy}
\label{sec:performance}
We can characterize the performance of the PMC Policy as follows: 

\begin{theorem}
\label{th:performance}
For any positive value $V$, the PMC Policy has the following properties:
\begin{itemize}
    \item[(a)] The queue stability (\ref{eq:queue_stable}) and collision constraints  (\ref{eq:collision_con}) are satisfied. The queue length is  upper bounded by
\begin{eqnarray}
   Q(t)\le Q_{\max}\mdef Vq_{\max}\!+\!A_{\max}, \quad\forall\, t;
\label{eq:queue_bound}
\end{eqnarray}
and the virtual queue length is upper-bounded by
\begin{eqnarray}
   Z_i(t)\le Z_{\max}\mdef \kappa (V q_{\max}+A_{\max})+1 , \quad\forall\,i, t.
\label{eq:virtual_queue_bound}
\end{eqnarray}
where $$\kappa\!\mdef\! \frac{r_{\max}p_0(1-P_{fa}(C^{s0})))}{(1-p_0)P_{md}(C^{s0})}$$ and  $C^{s0}\mdef\max\limits_{C^s}\frac{p_0(1-P_{fa}(C^{s})))}{(1-p_0)P_{md}(C^{s})}$.
    \item[(b)] The average profit  $\overline{R_{PMC}}$ obtained by the PMC policy satisfies
\begin{equation}
\label{eq:pmc_profit}
\inf \overline{R_{PMC}}\ge \overline{R}^*-O(1/V),
\end{equation}
where $\overline{R}^*$ is the optimal value of the PM problem.
\end{itemize}
\end{theorem}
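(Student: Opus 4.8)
The plan is to follow the structure of the statement: first establish the two deterministic sample-path bounds of part (a) by induction on $t$, then read off queue stability and the collision constraint as easy corollaries, and finally prove the near-optimality of part (b) by the standard drift-plus-penalty comparison against an optimal stationary randomized policy.

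For the queue bound \eqref{eq:queue_bound} I would induct with base case $Q(0)=0$. The mechanism is the revenue-maximization subproblem \eqref{eq:pricing}, whose objective equals $\left(q(t)-Q(t)/V\right)D\left(q(t),M(t)\right)$. Since $D$ is nonnegative and vanishes once $q(t)\ge q_{\max}$, whenever $Q(t)/V\ge q_{\max}$ every feasible price gives a nonpositive objective, so the PMC policy sets $O(t)=0$ and admits no demand. Hence if $Q(t)\le Vq_{\max}$ the arrival term in \eqref{eq:queue} adds at most $A_{\max}$ and $Q(t+1)\le Vq_{\max}+A_{\max}$, whereas if $Vq_{\max}<Q(t)\le Q_{\max}$ no arrivals occur and $Q(t+1)\le Q(t)\le Q_{\max}$; either way the bound propagates.

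The virtual-queue bound \eqref{eq:virtual_queue_bound} is the technical heart, and I expect it to be the main obstacle. Again I would induct, but now the suppression comes from the virtual sensing cost $\widetilde{C^s}(t)=C^s(t)+(1/V)Z_i(t)(1-p_0)P_{md}(C^s(t))$: a large $Z_i(t)$ inflates $C_i(t)$ until the selection inequality \eqref{eq:set_selection_sol} fails, dropping channel $i$ from $\mb^s(t)$ and forcing $X_i(t)=0$. To make this quantitative I would bound the per-channel gain on the right of \eqref{eq:set_selection_sol} by $\frac{Q(t)}{V}\alpha_0 r_{\max}$ (using that the optimal water-filling rate $\left(\log(\omega_0 h_i(t)/\lambda(t))\right)^+\le r_{\max}$), so channel $i$ cannot be chosen whenever $Z_i(t)(1-p_0)P_{md}(C^s)>Q(t)\,p_0(1-P_{fa}(C^s))\,r_{\max}$. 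Because the operator is free to pick any technology in $\mc^s$, this exclusion must hold for the \emph{worst-case} $C^s$, which is exactly why the threshold is governed by the maximizing ratio $C^{s0}$ and the constant $\kappa$. Combined with $Q(t)\le Q_{\max}$ from the first step, $Z_i(t)>\kappa Q_{\max}$ guarantees $X_i(t)=0$, and since $X_i(t)\in\{0,1\}$ the induction yields $Z_i(t)\le\kappa Q_{\max}+1=Z_{\max}$. Boundedness of both queues then gives $\overline{Q}<\infty$ at once, and summing \eqref{eq:virtual_queue} with $Z_i(t)\le Z_{\max}$ produces $\frac{1}{t}\sum_{\tau}X_i(\tau)\le\eta_i+Z_{\max}/t\to\eta_i$, i.e. \eqref{eq:collision_con}.

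For part (b) I would invoke the existence of an optimal stationary randomized policy $\gamma^*$ depending only on the i.i.d. observation $\phi(t)$, attaining $\expectation[R]=\overline{R}^*$ and satisfying $\expectation[O^*A^*-r^*]\le 0$ and $\expectation[X_i^*]\le\eta_i$ (a standard consequence of the finite support of $\phi(t)$). Since the PMC policy minimizes the right-hand side of \eqref{eq:pmc} in every slot, its drift-plus-penalty is no larger than the value obtained by substituting the decisions of $\gamma^*$ into that same bound; because $\gamma^*$ is independent of $\boldsymbol{\Theta}(t)$ and feasible, the $Q(t)$- and $Z_i(t)$-weighted terms are nonpositive, leaving $\Delta(\boldsymbol{\Theta}(t))-V\expectation[R_{PMC}(t)\mid\boldsymbol{\Theta}(t)]\le D-V\overline{R}^*$. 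Taking expectations, summing over $t=0,\dots,T-1$ so the Lyapunov terms telescope, using $L(\boldsymbol{\Theta}(0))=0$ and $L\ge 0$, and dividing by $VT$ gives $\frac{1}{T}\sum_{t}\expectation[R_{PMC}(t)]\ge\overline{R}^*-D/V-L(\boldsymbol{\Theta}(0))/(VT)$. Letting $T\to\infty$ yields $\overline{R_{PMC}}\ge\overline{R}^*-D/V$, and since $D$ is a constant independent of $V$, this is precisely the claimed $O(1/V)$ optimality gap.
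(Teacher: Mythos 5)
Your proposal is correct and follows essentially the same route as the paper's proof: the same two inductions for part (a) --- including the key step that a large $Z_i(t)$ inflates the virtual sensing cost past the maximal achievable gain $\frac{Q(t)}{V}\alpha_0 r_{\max}$ for every $C^s\in\mc^s$, so the selection rule \eqref{eq:set_selection_sol} excludes channel $i$ and forces $X_i(t)=0$ once $Z_i(t)>\kappa Q_{\max}$ --- and, for part (b), the same comparison of the drift-plus-penalty bound against an optimal stationary $\phi$-only policy. The only cosmetic differences are that you carry out the telescoping-sum argument explicitly where the paper invokes it as a cited lemma (Theorem 4.2 of the Neely reference), and you spell out how the bounded virtual queues yield the collision constraint \eqref{eq:collision_con}, a step the paper leaves implicit.
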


According to the Little's law, the average queuing delay is proportional to the queue length. Thus users experience bounded queuing delays under the PMC algorithm by (\ref{eq:queue_bound}).
By (\ref{eq:pmc_profit}), we find that the profit obtained by the PMC Policy can be made closer to the optimal profit by increasing $V$. However, as $V$ increases, the queuing delay also increases as shown in (\ref{eq:queue_bound}). The best choice of  $V$ depends on the desired trade-off between queuing delay and profit optimality.

A detailed proof of Theorem~\ref{th:performance} is provided in Appendices~\ref{sec:proof_th2_a} and \ref{sec:proof_th2_b}.

\subsection{Extension: More General Model of Primary Activities}
\label{sec:Markovian_model}
In the previous analysis, we have assumed that primary users' activities in each sensing channel follow a simple i.i.d. Bernoulli random process. {Next we will show that the PMC policy can be easily adapted to the} more general Markov chain model  of the primary users' activities shown in Fig.~\ref{fig:general_PU_model}. 
In this model, for any time $t$, $S_i(t)$ is unknown, but the history information $S_i(t-1)$ is known, and also the  transition probabilities $Pr(S_i(t)=s'|S_i(t-1)=s)\mdef p^i_{s\rightarrow s'}, s\in\{0,1\},s'\in\{0,1\}, i\in\mb^s_{\max}$ are known from long-time statistics.
\begin{figure}[htb]
\centering
\includegraphics[scale=0.28]{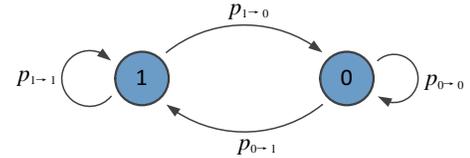}
\caption{Markov chain model of the PUs' activities}
\label{fig:general_PU_model}
\end{figure}

All previous analysis for PMC policy will still hold if we update two parameters $\omega_i(t)$ and $\alpha_i(t)$ as follows:
\begin{equation*}
\omega_i(t)\mdef
  \begin{cases}
   \expectation\,[S_i(t)|W_i(t)=1,S_i(t-1)],&\text{if }i\in\mb^s(t) \\
   1,       &\text{if }i\in\mb^l(t)
  \end{cases}
\end{equation*}
where
\begin{align*}
 &\expectation\,[S_i(t)|W_i(t)=1,S_i(t-1)=s]\nonumber\\
&=\frac{p^i_{s\rightarrow 1}(1-P_{fa}(C^s(t)))}{p^i_{s\rightarrow 1}(1-P_{fa}(C^s(t)))+p_{s\rightarrow 0}P_{md}(C^s(t))};
\end{align*}
and
\begin{equation*}
    \alpha_i(t)=\begin{cases}
\expectation\,[S_i(t)W_i(t)|S_i(t-1)]& \text{if } i\in\mb^s(t)\\
1 & \text{if } i\in\mb^l(t)
\end{cases}
\end{equation*}
where
\begin{align*}
&\expectation\,[S_i(t)W_i(t)|S_i(t-1)=s]=p^i_{s\rightarrow 1}(1-P_{fa}(C^s(t))).
\end{align*}

There is no change in the revenue maximization part, and the cost minimization part still involves a combinatorial optimization problem. But the complexity of solving cost minimization problem becomes $\mathcal{O}(|\mathcal{C}|\times 2^{|\mb^s_{\max}|}|\mb^l_{\max}+1|)$, since we lose the structure information in sensing channels, \ie the threshold structure does not hold for sensing channels. In the worst case ($\mb_{\max}=\mb^s_{\max}$, $\mb^l_{\max}=\emptyset$ ), it comes back to $\mathcal{O}(|\mathcal{C}|\times 2^{|\mb_{\max}|})$, which is the complexity of the exhaustive search without considering the threshold structure.

Let us further consider a special Markov chain model where the transition probability for each sensing channel is the same, \ie  $Pr(S_i(t)=s'|S_i(t-1)=s)\mdef p_{s\rightarrow s'}, \forall i\in\mb^s_{\max}$. In this model, all sensing channels can be categorized into two types, channels being busy in the last slot (\ie $S_i(t-1)=0$), or channels being idle in the last slot (\ie $S_i(t-1)=1$). We can still show threshold structures for both types. Thus the complexity is reduced to $\mathcal{O}(|\mathcal{C}|\times|\mb_{\max}|^4\log(|\mb_{\max}|))$.

The above analysis shows that it is critical to exploit the problem structure to reduce the algorithm complexity.

\section{Heterogeneous Users}
\label{sec:MPMC_policy}
In Section~\ref{sec:problem_formulation}, we adopt the single queue analysis for homogeneous users who are assumed to be located nearby and have the same channel condition on  each channel. However, the single queue analysis no longer works for a more general scenario of heterogeneous users, where users can be located at different places, and have different channel conditions. In this section, we introduce the multi-queue model to deal with the heterogenous user scenario as shown in Fig.~\ref{fig:clusters}.

We divide the total coverage of the secondary base station into $\mj\mdef\{1,2,\dots,J\}$ disjoint small areas (illustrated as hexagons in Fig.~\ref{fig:clusters}) according to users' different channel experiences. Users in one of these small area are nearby homogeneous users. They share the same channel conditions, and  form a queue based on the FCFS discipline. We use $Q_j(t)$ to denote the queue length in area $j$. Since the queue and the corresponding area is one-to-one mapping, we also call the users in area~$j$ as queue~$j$ users.
\begin{figure}[t]
\centering
\includegraphics[scale=0.53]{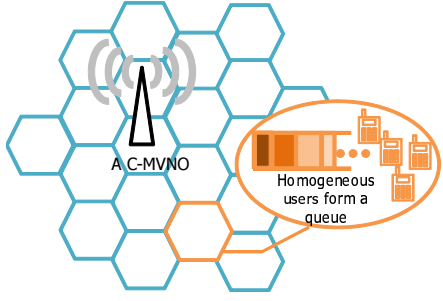}
\caption{Heterogeneous user model: Users in a hexagons are nearby homogeneous users, who have the same channel experience. Users in deferent hexagons can have different channel experience. }
\label{fig:clusters}
\end{figure}

For each queue $j\in\mj$, $h_{ij}(t)$ represents the users' channel gain  to channel $i\in\mb_{\max}$ at time $t$, which follows an i.i.d distribution over time. The indicator variable $T_{ij}(t)\in\{0,1\}$ denotes \textbf{the operator's channel assignment at time $t$}: $T_{ij}(t)=1$ if channel $i$ is allocated to queue~$j$, and $T_{ij}(t)=0$ otherwise. Meanwhile, the assignment $T_{ij}$ must satisfy
\begin{equation}
\label{eq:cluster_time_con}
    \sum_{j\in\mj}T_{ij}(t)\le 1, \forall t.
\end{equation}

The power allocation for queue~$j$ on channel $i$ is denoted by $P_{ij}(t)$. The total power allocation must satisfy
\begin{equation}
\label{eq:cluster_power_con}
    \sum_{j\in\mj}\sum_{i\in\mb_{\max}}P_{ij}(t)\le P_{\max}, \forall t,
\end{equation}
Thus the rate of queue~$j\in \mj$ can be calculated
\begin{equation}
r_j(t)=\sum_{i\in\mb(t)}I_i(t)T_{ij}(t)\log\left(1+\frac{h_{ij}(t)P_{ij}(t)}{T_{ij}(t)}\right)
\end{equation}
where $I_i(t)$ is the transmission result in channel $i$, following the same definition in Section~\ref{sec:power}.

For each queue~$j$, we follow the same demand model as in Section~\ref{sec:demand_model} for homogenous users. We assume the number of incoming users, the market state, and  the user's instantaneous demand are i.i.d among different queues, and denote them as $N_j(t)$, $M_j(t)$, and $A_j(t)$ respectively.
The market control variable and price for queue~$j$ are denoted as $O_j(t)$ and $q_j(t)$. The queuing dynamic for queue~$j$ is as follows:
\begin{equation}
\label{eq:micro_network_queue}
    Q_j(t+1)=\left(\!Q_j(t)-r_j(t)\right)^++ O_j(t)A_j(t).
\end{equation}
Thus the homogeneous user model in Section~\ref{sec:problem_formulation} can also be viewed as a special case of the heterogeneous user model, where $\mj$ is a singleton.

\subsection{Multi-queue Profit Maximization Control Policy}
\subsubsection{Revenue Maximization}
For any queue $j\in \mj$, we compute the optimal transmission price $q_j^{\ast}(t)$ by solving the following problem.
\begin{align}
\text{Maximize}\;\;& \left(q_j(t)-\frac{Q_j(t)}{V}\right)D\left(q_j(t),M_j(t)\right)\label{eq:cluster_pricing}\\
\text{Variables}\;\;& q_j(t)\ge 0\nonumber
\end{align}
If the maximum objective in (\ref{eq:cluster_pricing}) is positive, the operator sets transmission control variable $O_j^{\ast}(t)=1$ and accepts users' new file download requests at the price $q_j^\ast(t)$. Otherwise, the operator sets $O_j^{\ast}(t)=0$ and rejects any new requests.

\subsubsection{Cost Minimization}
\label{sec:sensing_resource}

We solve the following optimization problem to determine sensing cost and resource allocation:
\begin{align}
\text{Minimize}& \sum_{i\in \mb(t)}\!\! C_i(t)-\sum_{j\in\mj} \frac{Q_j(t)}{V}\expectation\,[r_j(t)]\label{eq:mul_r_problem}\\
\text{Subject to}&\quad (\ref{eq:sensing_bandwidth_con}), (\ref{eq:leasing_bandwidth_con}), (\ref{eq:cluster_power_con}),(\ref{eq:cluster_time_con})  \nonumber\\
\text{Variables}&\quad C^s(t),\mb^s(t),\mb^l(t), P_{ij}(t)\ge 0, T_{ij}(t)\in\{0,1\}\nonumber
\end{align}
where the cost $C_i(t)$ follows the same definition of (\ref{eq:virtual_cost}).
Similar to the homogenous model in Section\ref{sec:cost_min}, it is a two-stage decision problem. In the first stage, we determine the sensing technology, and choose the sensing channels and the leasing channels. Then in the second stage, we determine the channel assignment and power allocation based on the sensing results. We use backward induction to solve this problem (\ref{eq:mul_r_problem}). To simplify the notation, we will ignore the time index in the following analysis.

We first analyze the channel assignment and power allocation in the second stage.
 In this stage, since the sensing results $W_i$, the channels $\mb^s$, and the sensing technology $C^s$  are determined, the second stage problem of  (\ref{eq:r_problem}) is shown as follows:
\begin{align}
\text{Maximize}& \sum_{j\in\mj} \sum_{i\in \mb}Q_j\omega_iT_{ij}\log \left(1+\frac{h_{ij}P_{ij}}{T_{ij}}\right)\label{eq:mul_2nd_problem}\\
\text{Subject to}&\quad (\ref{eq:cluster_power_con}),(\ref{eq:cluster_time_con})\nonumber\\
\text{Variables}&\quad P_{ij}\ge 0, T_{ij}\in\{0,1\}\nonumber
\end{align}
where $\omega_i$ follows the same definition of (\ref{eq:omega}).

Compared to the power allocation problem in  (\ref{eq:power_allocation_problem}), the binary channel assignment variables $T_{ij}$'s make the second stage problem in (\ref{eq:mul_2nd_problem}) much more complex.

We first solve problem (\ref{eq:mul_2nd_problem}) assuming fixed $T_{ij}$'s, in which case the power allocation problem is a convex optimization problem. Following the same method of solving power allocation for homogeneous users as in (\ref{eq:power_allocation}), we have
\begin{equation}
\label{eq:mul_power_allocation}
P_{ij}=T_{ij}Q_j\omega_i\left(\frac{1}{\lambda}-\frac{1}{Q_j\omega_ih_{ij}}\right)^+,\; i\in\mb.
\end{equation}
where $\lambda$ is the Lagrange multiplier of the total power constraint (\ref{eq:cluster_power_con}), which satisfies
\begin{equation}
\label{eq:mul_dual_sol}
\lambda= \frac{\sum_{i\in \mb_{p}}\omega_iQ_jT_{ij}}{P_{\max}+\sum_{i\in\mb_{p}}\frac{T_{ij}}{h_i(t)}},
\end{equation}
where $\mb_{p}\mdef\{i\in\mb:P_{ij}>0\}$.
When $T_{ij}$ is known, we can design a simple search algorithm similar to Alg.~\ref{alg:power_allocation} to determine the optimal value of $\lambda$.

We then substitute this result in (\ref{eq:mul_2nd_problem}), and further maximize the objective over $T_{ij}$'s. Then we have
\begin{align}
\text{Maximize}& \;\sum_{i\in \mb}\omega_i\sum_{j\in\mj} Q_jT_{ij}\left(\log\left(\!\frac{\omega_iQ_ih_i}{\lambda}\!\right)\right)^+\label{eq:mul_2nd_problem_T}\\
\text{Subject to}&\;\sum_{j\in\mj}T_{ij}\le 1\nonumber\\
\text{Variables}&\;\; T_{ij}\in\{0,1\}\nonumber
\end{align}

For each channel $i\in\mb$, let us denote the set
\begin{equation}
\label{eq:cluster_time_set}
\mj_i^*\!\mdef\!\arg\max_{j\in\mj}Q_j\left(\!\log\left(\!\frac{\omega_iQ_jh_{ij}}{\lambda}\!\right)\!\!\right)^+
\end{equation}
Here the solution $\mj_i^*$ is a set of indices of the chosen queues.
If $\mj_i^*$ is a singleton, then we denote its unique element as $j_i^*$. If $\mj^*$ is not singleton, since all elements in $\mj^*$ lead to the same value in objective, then we can randomly pick one of the its element and denote it as $j_i^*$.

Since the objective in problem (\ref{eq:mul_2nd_problem_T}) is linear in $T_{ij}$, it is easy to see the optimal solution is
\begin{equation}
\label{eq:cluster_time_sol}
T_{ij}=\begin{cases}
1, &\text{if } j=j_i^*,i\in\mb\\
0, &\text{otherwise}.
	 \end{cases}
\end{equation}

Now let us consider how to calculate the value of $\lambda$ and $j^*_i$. By (\ref{eq:mul_dual_sol}) and (\ref{eq:cluster_time_set}), we find that they are actually coupled together. To determine $\lambda$ in (\ref{eq:mul_dual_sol}), we need to know $T_{ij}$ (or equivalently $\mj_i^*$, $j^*_i$), \ie which queue is chosen for which channel. But determining $\mj_i^*$ in (\ref{eq:cluster_time_set}) requires the value of  $\lambda$.
One way to solve this problem is to enumerate every possible channel assignment combinations to find the solutions satisfying both (\ref{eq:mul_dual_sol}) and (\ref{eq:cluster_time_set}). Since each channel $i\in\mb_{\max}$ can be assigned to one of $J$ queues, there are a total of $|\mb_{\max}|^J$ channel assignment combinations. When the $J$ or $|\mb_{\max}|$ is large, the complexity can be very high. However, we can reduce the search complexity by exploring the special structure of the problem.
\begin{property}
For each channel allocated positive power $i\in\mb_p$, we have
\begin{equation}
\label{eq:channel_selection}
    \mj_i^*=\arg \max\, \{Q_j\,|\,h_{ij}>\frac{\lambda}{Q_j\omega_i}, j\in\mj\}.
\end{equation}
\end{property}

This property comes from (\ref{eq:cluster_time_set}). This means that for a particular channel $i$, if the channel gain for the longest queue $Q_j$ is good enough (\ie $h_{ij}>\frac{\lambda}{Q_j \omega_i}$), we should assign channel $i$ to queue  with the longest queue length. {If $\mj_i^*$ is a singleton, then we denote its unique element as $j^*_i$. Otherwise, we denote $\hat{\mj}^{*}_i\mdef \arg \max\, \{h_{ij}\,|\, j\in\mj_i^*\}$. In this case, there are multiple channels with the same channel gain and the same queue length, and we can randomly pick one and denote it as $j^*_i$.}
Thus we can search $\lambda$ and $\mj_i^*$ (and also $j_i^*$) by a simple greedy algorithm as follows. First, for all channels, we assume $\mj^*_i=\arg \max_{j\in\mj}\,Q_j$, and calculate the value of $\lambda$ by the waterfilling algorithm (as the procedure ``Waterfilling'' in Alg.~\ref{alg:channel_assignment}).
For each unchosen channel, \ie the channel $i$ with $\omega_iQ_{j^*_i}h_{ij^*_i}\le\lambda$, we check whether $\omega_iQ_{j}h_{ij}>\lambda$ can be satisfied when another queue is chosen instead of $j^*_i$. If there is some set $\tilde{J}_i$ of queues satisfying $\omega_iQ_{j}h_{ij}>\lambda$, we replace ${\mathcal{J}}^*_i$ with the one with the longest queue length in this set, \ie $\arg \max_{j\in\tilde{J}_i}\,Q_j$. 
We repeat the process iteratively until we find the $\lambda$ and $j^*_i$ that satisfies both (\ref{eq:mul_dual_sol}) and (\ref{eq:cluster_time_set}).
The pseudo code is given in Alg.~\ref{alg:channel_assignment}. To simplify the expression of Alg.~\ref{alg:channel_assignment}, with a little abuse of notations, we denote $h_i=h_{ij^*_i}$, $Q_i=Q_{j^*_i}$, and  $\Lambda(m)\mdef \frac{\sum_{i=1 }^m\omega_iQ_i}{P_{\max}+\sum_{i=1}^m\frac{1}{h_i}}.$

\begin{algorithm}[hbt]
\caption{Channel Assignment}
\label{alg:channel_assignment}
\begin{algorithmic}[1]
\State $\mj^*_i\gets\arg \max_{j\in\mj}\,Q_j$
\Statex
\Procedure {Waterfilling} {$h_i(t),Q_i(t)$}
\State Rearrange the channel indices $i\in\mb_{\max}$ in the decreasing order of $\omega_i(t)Q_{i}(t)h_{i}(t)$
\State $m\gets |\mb(t)|$, $\lambda\Leftarrow\Lambda(m)$ 
\While{$\lambda\ge h_m(t)\omega_m(t)$}
	\State $m\gets m-1$		
	\State $\lambda\Leftarrow\Lambda(m)$
\EndWhile
\EndProcedure
\Statex
\While{$\omega_i(t)h_{ij}(t)Q_j(t)>\lambda,\forall j, \,\forall i>m, $}
	\State $\mj^*_i\gets\arg \max\,\{Q_j|\omega_i(t)h_{ij}(t)Q_j(t)>\lambda\}$		
	\State {\bf invoke} procedure {Waterfilling}( {$h_i(t),Q_i(t)$})
\EndWhile
\end{algorithmic}
\end{algorithm}

The complexity of Alg.~\ref{alg:channel_assignment} is $\mathcal{O}(|\mb_{\max}|^3\log(|\mb_{\max}|)$, since the while loop runs no more than $|\mb_{\max}|$ times in the worst case, and the complexity of waterfilling part is $\mathcal{O}(|\mb_{\max}|^2\log(|\mb_{\max}|)$ (the same as the waterfilling power allocation algorithm in Section~\ref{sec:2nd_stage}).
Compared to the exhaustive search, the complexity of solving the channel assignment is greatly reduced.

With the solution of channel assignment, we can update the power allocation solution of (\ref{eq:mul_r_problem}) as
 $$P_{ij}^*=\begin{cases}
\omega_iQ_j\left(\frac{1}{\lambda}-\frac{1}{\omega_iQ_jh_{i}}\right)^+, &\text{if } j=j^*, i\in\mb,\\
0 & \text{otherwise},
\end{cases}
$$
where the value of $\lambda$, $h_i$ and $Q_i$ are calculated by Alg.~\ref{alg:channel_assignment}.

After solving the second stage problem, we move to the first stage. Following the channel assignment in (\ref{eq:cluster_time_sol}), we find that the first stage problem for the heterogeneous users is the same with the one of homogeneous users problem in (\ref{eq:r_problem_2}). We can simply run the same Alg.~\ref{alg:sensing_cost_search} to determine sensing technology $C^{s*}(t)$,  sensing channel set $\mb^{s*}(t)$ and leasing channel set $\mb^{l*}(t)$.

\subsection{The Performance of M-PMC Policy}

Next we show the performance bounds of the M-PMC Policy. The proof method is similar to that of Theorem~\ref{th:performance}, and the details are omitted due to space limit.

\begin{theorem}
\label{th:performanceM}
For any positive value $V$, the M-PMC Policy has the following properties:

\begin{itemize}
    \item[(a)] The queue stability (\ref{eq:queue_stable}) and collision constraints  (\ref{eq:collision_con}) are satisfied. The queue length is  upper bounded by
\begin{eqnarray}
   Q_j(t)\le Q^{\max}\mdef Vq^{max}\!+\!A^{max}, \quad\forall\, t,
\label{eq:cluster_queue_bound}
\end{eqnarray}
and the virtual queues are bounded by
\begin{eqnarray}
   Z_i(t)\le Z^{\max}\mdef \kappa (V q^{\max}+A^{\max})+1 , \quad\forall\,i, t.
\label{eq:cluster_virtual_queue_bound}
\end{eqnarray}
where $\kappa\mdef r_{\max}p_0\frac{p_0(1-P_{fa}(C^s(t)))+(1-p_0)P_{md}(C^s(t))}{(1-p_0)}$, and $C^{s0}$ denotes the highest sensing cost.
    \item[(b)] The average profit  $\overline{R_{M-PMC}}$ obtained by the M-PMC policy satisfies
\begin{equation}
\label{eq:utility_max_proof}
\inf \overline{R_{M-PMC}}\ge \overline{R}^*-O(1/V),
\end{equation}
where $\overline{R}^*$ is the optimal value of the multi-queue PM problem.
\end{itemize}
\end{theorem}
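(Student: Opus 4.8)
The plan is to follow the Lyapunov drift-plus-penalty framework used for Theorem~\ref{th:performance}, extending the single-queue analysis to the multi-queue vector $\boldsymbol{\Theta}(t)=\{Q_1(t),\ldots,Q_J(t),\boldsymbol{Z}(t)\}$ with Lyapunov function $L(\boldsymbol{\Theta}(t))=\frac{1}{2}\big[\sum_{j\in\mj}Q_j(t)^2+\sum_{i\in\mb^s_{\max}}Z_i(t)^2\big]$. For part~(a), I would first establish the per-queue bound \eqref{eq:cluster_queue_bound} by induction on $t$. The structural fact is that the revenue maximization problem \eqref{eq:cluster_pricing} sets $O_j(t)=1$ only when its objective is strictly positive, which (since $D\ge 0$) forces $q_j(t)>Q_j(t)/V$; because $D$ vanishes once $q_j(t)\ge q^{\max}$, new demand is admitted to queue~$j$ only when $Q_j(t)<Vq^{\max}$. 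Hence if $Q_j(t)\le Vq^{\max}$ then $Q_j(t+1)\le Q_j(t)+A^{\max}\le Vq^{\max}+A^{\max}$, while if $Q_j(t)>Vq^{\max}$ no arrivals are admitted and $Q_j(t+1)\le Q_j(t)$, closing the induction.

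For the virtual-queue bound \eqref{eq:cluster_virtual_queue_bound} I would again argue by induction, exploiting the structure of the cost minimization problem \eqref{eq:mul_r_problem}. A sensing channel $i$ is selected only when its queue-weighted rate benefit exceeds its virtual cost $\widetilde{C^s}(t)=C^s(t)+(1/V)Z_i(t)(1-p_0)P_{md}(C^s(t))$. Since the per-channel rate is capped by $r_{\max}$ and the relevant queue weight by $Q^{\max}$, the benefit cannot exceed the virtual cost once $Z_i(t)$ passes a threshold proportional to $Q^{\max}$, with proportionality constant $\kappa$ aggregating the sensing parameters; beyond this point channel~$i$ is never sensed, so $X_i(t)=0$ and $Z_i(t+1)\le Z_i(t)$. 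Because $X_i(t)\le 1$, the virtual queue overshoots this threshold by at most one, yielding $Z_i(t)\le\kappa(Vq^{\max}+A^{\max})+1$. The bounded queues immediately give the stability constraint \eqref{eq:queue_stable}, and the stability of each bounded $Z_i(t)$ implies $\overline{X_i}\le\eta_i$, which is exactly the collision constraint \eqref{eq:collision_con}.

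For part~(b), I would derive the multi-queue analogue of inequality \eqref{eq:pmc}, bounding the drift-plus-penalty by a finite constant $D'$ (absorbing all second-moment terms via the boundedness of rates, demands, and collisions) plus the decision-dependent objective that the M-PMC policy minimizes at each slot, namely the combination of \eqref{eq:cluster_pricing} and \eqref{eq:mul_r_problem}. Because M-PMC minimizes this objective over all feasible decisions, I would upper-bound it by evaluating the objective at an optimal $\boldsymbol{\Theta}$-independent stationary randomized policy $\gamma^{\mathrm{stat}}$ that achieves profit arbitrarily close to $\overline{R}^*$ while satisfying $\expectation[O_j^{\mathrm{stat}}A_j^{\mathrm{stat}}-r_j^{\mathrm{stat}}]\le 0$ and $\expectation[X_i^{\mathrm{stat}}]\le\eta_i$. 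For such a policy the queue-weighted terms $\sum_{j}Q_j(t)\expectation[O_j^{\mathrm{stat}}A_j^{\mathrm{stat}}-r_j^{\mathrm{stat}}]$ and $\sum_i Z_i(t)\expectation[X_i^{\mathrm{stat}}-\eta_i]$ are non-positive, leaving $\Delta(\boldsymbol{\Theta}(t))-V\expectation[R_{M-PMC}(t)\mid\boldsymbol{\Theta}(t)]\le D'-V\overline{R}^*$. Taking expectations, summing over $t=0,\ldots,T-1$, telescoping the drift, and using $L\ge 0$ and $L(\boldsymbol{\Theta}(0))\ge 0$, I would divide by $VT$ and let $T\to\infty$ to conclude $\inf\overline{R_{M-PMC}}\ge\overline{R}^*-D'/V=\overline{R}^*-O(1/V)$.

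The main obstacle I anticipate is establishing the existence of the optimal stationary randomized policy $\gamma^{\mathrm{stat}}$ with the required slackness on the queue and collision constraints, since the feasible decision set now couples the combinatorial channel-assignment variables $T_{ij}$ with the two-stage sensing-then-power structure. This step requires a Carath\'eodory-type argument (or an appeal to the theory of stationary policies for constrained MDPs as in \cite{neely2010stochastic}) to show that the optimal long-run average profit over the i.i.d.\ parameter process $\phi(t)$ is attained by such a policy. By contrast, verifying finiteness of $D'$ and carrying out the telescoping algebra is routine and parallels the homogeneous proof of Theorem~\ref{th:performance}.
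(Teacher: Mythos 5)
Your proposal is correct and follows essentially the same route as the paper: the paper itself omits the details of Theorem~\ref{th:performanceM} by appealing to the proof of Theorem~\ref{th:performance}, whose structure you reproduce faithfully in the multi-queue setting --- the per-queue induction using the sign of the objective in \eqref{eq:cluster_pricing} and the virtual-cost threshold argument for part~(a), and the drift-plus-penalty comparison against an optimal stationary $\phi$-only randomized policy (the existence step you flag is exactly the paper's citation of Theorem~4.5 of \cite{neely2010stochastic}, and your telescoping argument is the content of the paper's Lemma~\ref{th:Lyapunov}) for part~(b).
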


\section{Simulation}
\label{sec:simulation}
In this section we provide simulation results for PMC and M-PMC policies.

We conduct simulations with the following parameters. The number of incoming users in each slot satisfies a Poisson distribution with a rate $D(q(t),M(t))=\frac{1}{M(t)}(q(t)-5)^2$. The market state $M(t)$ satisfies  Bernoulli distribution, $M(t)=1$ with probability 0.5, and $M(t)=2$ with probability 0.5. The file length of each user satisfies the i.i.d. (discrete) uniform distribution between 1 and 10. There are 32 channels in total. 20 of them belong to the sensing band $\mb^s_{\max}$, and the rest 12 channels belong to the leasing band $\mb^l_{\max}$. The primary collision probability tolerant levels are set as $\eta_i=0.001$ for sensing channel $i=1,2,\dots,10$, and $\eta_i=0.005$ for sensing channel $i=11,12,\dots,20$. The channel gain $h_i$ of each channel satisfies i.i.d. (continuous) Rayleigh distribution with parameter $\sigma=4.5$ 
The total power constraint of the base station is $P^{\max}=8$.
 There are 3 different sensing technologies with costs $\mc^s=\{0\, (\text{not sensing at all}), 0.1,0.5\}$. The corresponding false alarm  probabilities are $P_{fa}=\{0.5,0.1,0.008\}$, and the missed detection probabilities are $P_{md}=\{0.5,0.08,0.005\}$.
We assume the idle time probability of sensing band $p_0$ is 0.6, and the control parameter $V\in\{5, 10, 50, 100, 200\}$.

Figure~\ref{fig:collision} shows a collision situation of all sensing channels with the control parameter $V=100$. We find that primary users' collision tolerant bound (\ref{eq:collision_con}) is satisfied as time increases. We also find that we obtain similar curves for the collision probabilities with other values of control parameter $V$.

\begin{figure}[htb]
\centering
\includegraphics[scale=0.37]{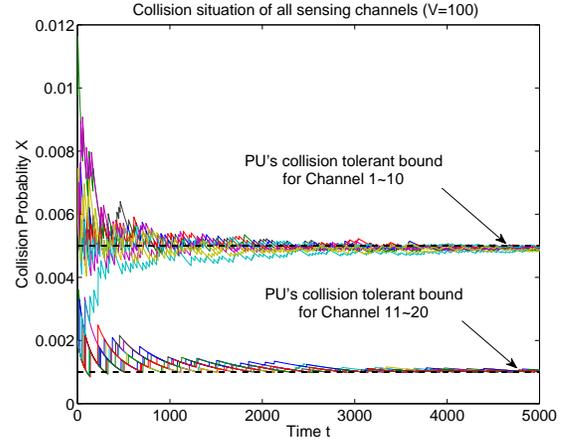}
\caption{A collision situation of all sensing channels with $V=100$}
\label{fig:collision}
\end{figure}

Figure.~\ref{fig:queue_length} (a) shows that the average queue length grows linearly in $V$,  and is always less than the worst case bound $Vq^{max}+A^{max}$.
Figure.~\ref{fig:queue_length} (b) shows that
the average profit achieved by PMC policy converges quickly as $V$ grows, and  is close to the maximum profit when $V \geq 100$.

\begin{figure}[htb]
\centering
\includegraphics[scale=0.36]{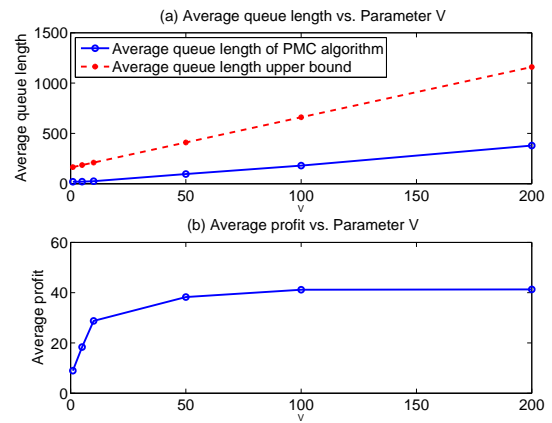}
\caption{(a) Average queue length vs. Parameter $V$, (b) Average profit vs.  Parameter $V$}
\label{fig:queue_length}
\end{figure}

\begin{figure}[htb]
\centering
\includegraphics[scale=0.38]{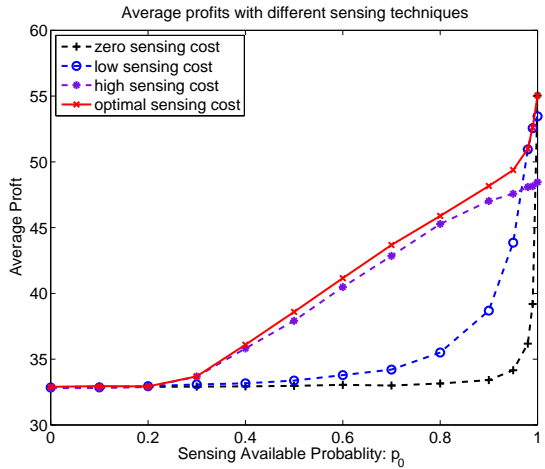}
\caption{Average Profit with different sensing technologies}
\label{fig:sensing_cost}
\end{figure}

We further vary the idle time probability of sensing band $p_0$ from 0 to 1. In Fig.~\ref{fig:sensing_cost}, we show the average profit with different sensing available probabilities $p_0\in [0,1]$ and different fixed sensing technologies. The black curve is with zero sensing cost, where $P_{fa}=P_{md}=0.5$, which means the operator does not perform sensing and takes random guesses of primary users' activities in sensing channels. The blue curve is with the low sensing cost $0.1$, where $P_{fa}=0.1$ and $P_{md}=0.08$. The purple curve is with the high sensing cost $0.5$,  where $P_{fa}=0.008$ and $P_{md}=0.005$. The red curve is the PMC policy, which adaptively choose sensing cost from the above three (sensing cost $C_s=0,0.1,0.5$). When the sensing available probability is small (e.g., $p_{0}\in [0,0.2]$), all strategies tend to only choose the leasing channels, thus all curves obtain similar profits. When the sensing available probability $p_{0}$ further increases, the advantage of exploring sensing channels becomes more significant. The performances for strategies using the zero and low sensing cost are not good. The reason is that their detection accuracy is not good enough. To achieve the primary users' collision bounds, these strategies choose sensing channels less often, and replace with more expensive leasing channels. When the sensing available probability is high and close to 1,  sensing seems unnecessary. Therefore, the performance increases as sensing cost decreases, where the zero cost is the best and the high cost is the worst. The PMC policy (the red curve) adaptively chooses the sensing cost, {\ie when $p_{0}$ is medium, it utilizes the high sensing cost strategy in most of time slots; as $p_0$ keeps increasing, it gradually changes to utilize the low cost and zero cost strategies more frequently; and when $p_0$ goes to 1, it utilizes the zero sensing cost strategy in most of time slots. It has the best performance, since it can take advantage of different sensing technologies for different sensing available probabilities.}

For the  M-PMC policy, we conduct simulations for a simple two-queue system. For queue-1, the channel gain $h_{i1}$ of each channel satisfies i.i.d. (continuous) Rayleigh distributions with parameter $\sigma=4.5$. For queue-2, the channel gain $h_{i2}$ of each channel satisfies i.i.d. (continuous) Rayleigh distributions with parameter $\sigma=5.5$. This is because  queue-2 users are closer to the operator's base station than queue-1 users.

Figure~\ref{fig:multi_Q} (a) shows that the average transmission rate obtained by queue-2 users is higher than that of the queue-1 users. This is because queue-2 users usually have better channel conditions, and M-PMC policy prefers to allocate more powers to better channels to improve the transmission rate. Figure~\ref{fig:multi_Q} (b) shows that  the revenues obtained by the operator from users of two queues are almost the same when all queues are stable. It is an interesting observation. We can understand it in this way: when two queues make different revenue, to maximize the profit (also the revenue), the operator will allocate more transmission rates to the queue with a higher revenue. Thus the length of the queue with a higher transmission rate will be shortened, and the negative queuing effect in revenue maximization problem will be soon diluted. It further leads to a decreasing price and a decreasing revenue. In contrast, the length of the queue with a lower transmission rate increases, which results in an increasing price and an increasing revenue. Therefore, when all queues are stable finally, the average revenue generated by each queue is the same.


\begin{figure}[htb]
\centering
\includegraphics[scale=0.39]{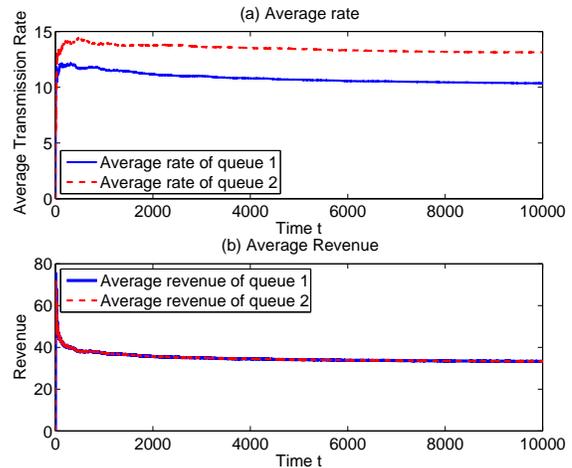}
\caption{(a) Average transmission rates of a two-queue M-PMC policy, (b) Average revenues  of a two-queue M-PMC policy }
\label{fig:multi_Q}
\end{figure}

\section{Conclusion}
\label{sec:conclusion}
In this paper, we study the profit maximization problem of a cognitive mobile virtual network operator in a dynamic network environment. We propose low-complexity PMC and M-PMC policies which perform both revenue maximization with pricing and market control, and cost minimization with proper resource investment and allocation. We show that these policies can achieve arbitrarily close to the optimal profit, and have flexible trade-offs between profit optimality and queuing delay.

We also find several interesting features in these close-to-optimal policies. In revenue maximization,  the dynamic pricing strategy performs the functionality of congestion control to users' demands, \ie the longer the queue length of demands, the higher price the operator should charge. In cost minimization, the operator is risk averse towards spectrum investment, and prefers stable leasing spectrum to unstable sensing spectrum with the same channel condition and the same availability-price-ratio. 

In this paper, we only looked at the issue of elastic traffic. It  would be worthwhile to incorporate inelastic traffic, which usually has strict constraints on transmission rates and delays. Typical examples include real-time multimedia applications, \eg audio streaming, Video on Demand (VoD), and Voice over IP (VoIP). In the most general case, we can consider a hybrid system with both elastic and inelastic traffic, which is more realistic and practical. In addition, as mentioned in Section~\ref{sec:literature}, the literature about competition in cognitive radio networks mainly focus on the static network scenario. It is also interesting to extend our dynamic model to incorporate competition among several network operators.


\begin{IEEEbiography}[{\includegraphics[width=1in,height=1.25in,clip,keepaspectratio]{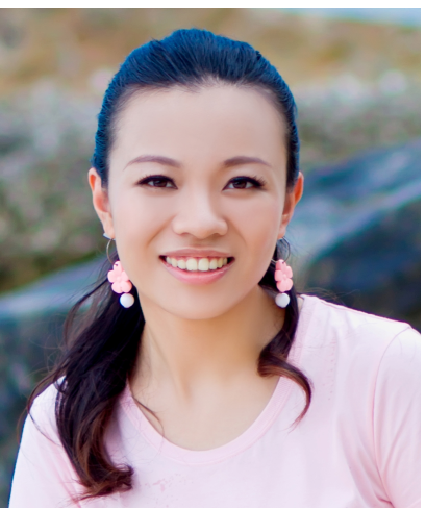}}]{Shuqin Li } received Ph.D. in Information Engineering from The Chinese University of Hong Kong in 2012. She now works as a research scientist in Research and Innovation, Alcatel-Lucent Shanghai Bell Co., Ltd. Her research interests include resource allocation, pricing and revenue management in communication networks, applied game theory, contract theory and incentive mechanism design in network economics, network coding and stochastic network optimization.
\end{IEEEbiography}

\begin{IEEEbiography}[{\includegraphics[width=1in,height=1.25in,clip,keepaspectratio]{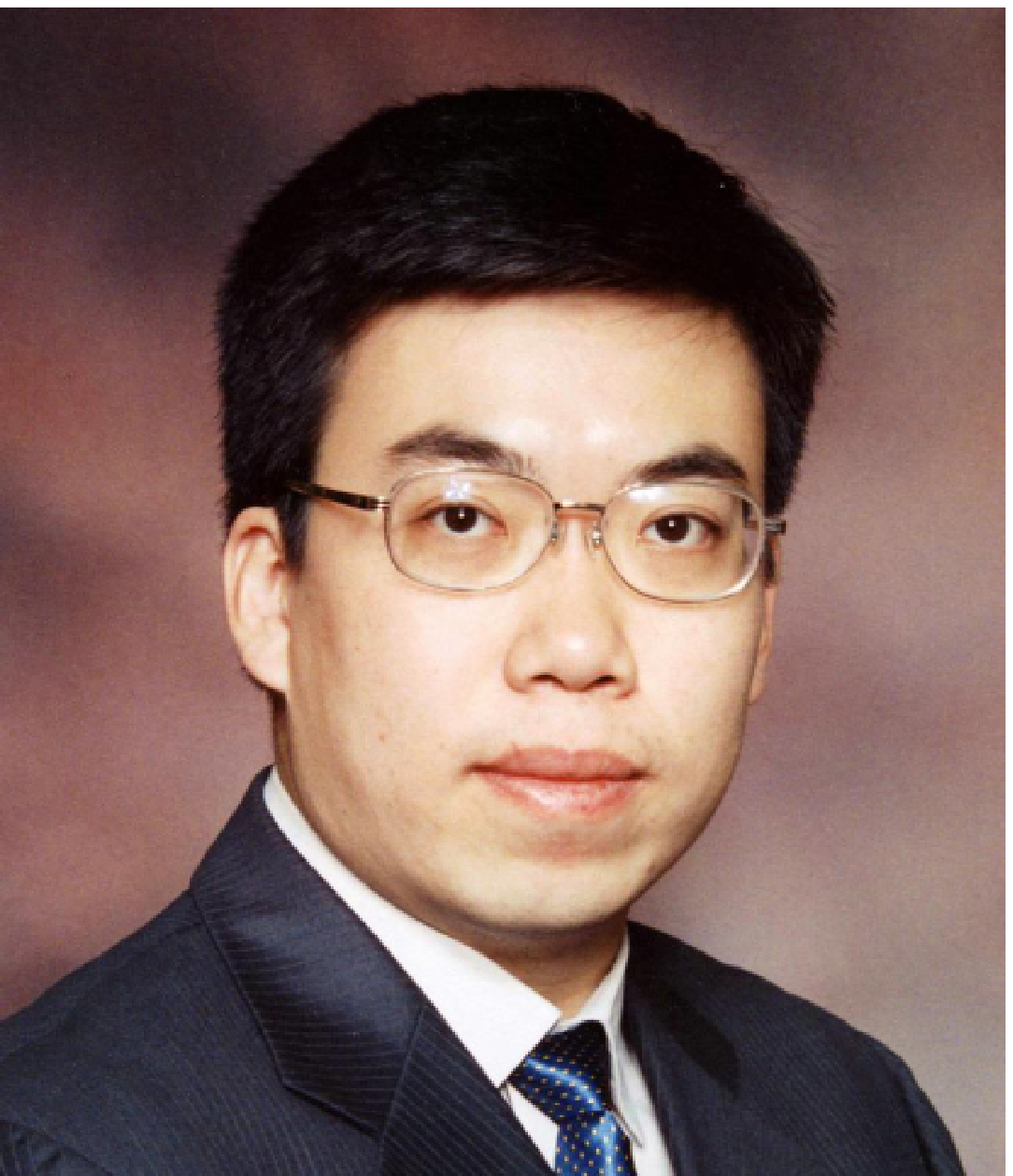}}]{Jianwei Huang}
is an Assistant Professor in the Department of
Information Engineering at the Chinese University of Hong Kong. He
received Ph.D. in Electrical and
Computer Engineering from Northwestern University in 2005. Dr. Huang currently
leads the Network Communications and Economics Lab
(ncel.ie.cuhk.edu.hk). He is the co-recipient of five best paper awards, including the 2011  IEEE Marconi Prize Paper Award in Wireless Communications .

Dr. Huang currently serves as Editor of \emph{IEEE Transactions on Wireless
Communications}, Editor of \emph{IEEE Journal on Selected
Areas in Communication - Cognitive Radio Series}, and Chair of IEEE ComSoc Multimedia Communications Technical Committee.
\end{IEEEbiography}

\begin{IEEEbiography}[{\includegraphics[width=1in,height=1.25in,clip,keepaspectratio]{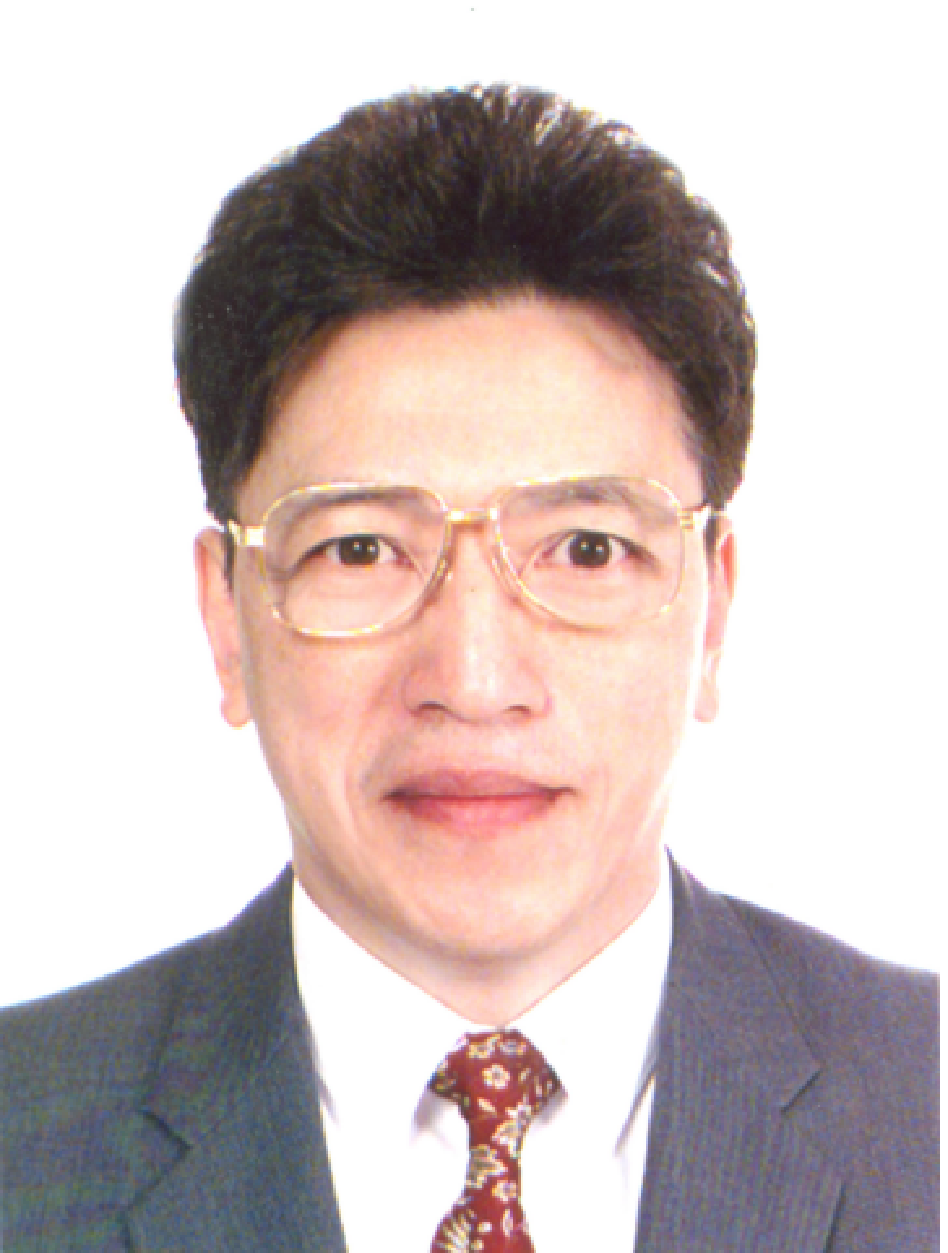}}]{Shuo-Yen Robert Li } S.-Y. Robert Li (Bob Li) received the PhD degree from UC Berkeley in 1974. He taught applied math at MIT in 1974-76 and math/statistics/CS at U. Illinois, Chicago in 1976-79. After working at Bell Labs/Bellcore for a decade, he joined CUHK as a chair professor in 1989. He now also serves as Co-director of Institute of Network Coding as well as honorary professor in a few universities. Bob initiated algebraic switching theory and is a cofounder of network coding theory. His "martingale of patterns" (1980) is widely applied to genetics.

\end{IEEEbiography}

\newpage
\appendices

\section{Proof for Theorem~\ref{th:performance} (a)}
\label{sec:proof_th2_a}
We first prove (\ref{eq:queue_bound}) by induction.

It is easy to see that at slot $t=0$, no packets are in the network and $Q(0)=0$, thus the queue length bound (\ref{eq:queue_bound}) obviously holds.
Now suppose (\ref{eq:queue_bound}) holds for time $t$. We consider the queue length bound in slot $t+1$ in the following cases:
\begin{itemize}
    \item \emph{Case 1:} $Q(t)\le Vq_{\max}$, then clearly $Q(t+1)\le Vq_{\max}+A_{\max}$.
	 \item \emph{Case 2:} $Q(t)> Vq_{\max}$, \ie the objective value of the revenue maximization (\ref{eq:pricing}) is negative. Therefor, according to the optimization solution, the operator will set $O(t)=0$, and do not accept any new request, $A(t)=0$. Therefore, $Q(t+1)\le Q(t)\le Vq_{\max}+A_{\max}$.
\end{itemize}

Likewise, we can also prove the virtual queue bound (\ref{eq:virtual_queue_bound}) by induction.  Suppose that the inequality (\ref{eq:virtual_queue_bound}) holds for time $t$, and consider the following two cases.

\begin{itemize}
\item \emph{Case 1:} $Z_i(t)\le Z_{\max}-1$, clearly the virtual queue length bound (\ref{eq:virtual_queue_bound}) also holds at slot $t+1$.

\item \emph{Case 2:} $Z_i(t)> Z_{\max}-1$, 
\ie
\begin{equation}
\label{eq:virtual_queue_proof}
Z_i(t)\ge \max\limits_{C^s} \frac{r_{\max} Q_{\max}\alpha_i(t)}{(1-p_0)P_{md}(C^{s}(t))}
\end{equation}
Since $r_{\max}\ge \sum_{i\in\mb_{\max}}r_i(t)=\sum_{i\in\mb_{\max}}\log\left(\!\frac{\omega_i(t)h_i(t)}{\lambda(t)}\!\right)$,  then inequality (\ref{eq:virtual_queue_proof}) implies
\begin{align}
C_i(t)\!=\!C^s(t)\!+\!Z_i(t)\frac{1\!-\!p_0}{V}\!>\!\alpha_i(t)\log\left(\!\frac{\omega_i(t)h_i(t)}{\lambda(t)}\!\right).
\end{align}

By (\ref{eq:set_selection_sol}) in the PMC policy, channel $i$ will not be chosen for sensing and transmission, thus there will be no collision in this channel, \ie $X_i(t+1)=0$. Then by (\ref{eq:virtual_queue}) we have $Z_i(t+1)\le Z_i(t)$ and
the virtual queue length bound (\ref{eq:virtual_queue_bound}) also holds at $t+1$.
\end{itemize}

\section{Proof for Theorem~\ref{th:performance} (b)}
\label{sec:proof_th2_b}

We first construct a {\bf stationary} randomized policy that can achieve the optimal solution of the PM problem. Let us consider a special class of stationary randomized policies, called the \emph{$\phi$-only policy}, which makes the decision $\gamma(t)$ in slot $t$ only depending on the observation of system parameter $\phi(t)$. The stationary distribution for the observable parameter $\phi(t)$ is denoted as $\{\Pi_{\phi}, \phi\in\Phi\}$. (Recall that we define $\phi(t)\mdef(M(t),{\boldsymbol{h}}(t),{C^l}(t))$ and $\gamma(t)\mdef(O(t),q(t),{C^s}(t), \mb^s(t), \mb^l(t), \boldsymbol{P}(t))$. Note that the value of $\phi(t)$ can be chosen only from a finite set $\Phi$.) In the $\phi$-only policy, when the operator observes $\phi(t)=\phi$, it chooses $\gamma(t)$ from the countable collection of ${\Gamma_\phi (t)}=\{\gamma_{\phi}^1,\gamma_{\phi}^2,\dots \}$ with probabilities $\{\rho^1_{\phi},\rho^2_{\phi}\dots\}$, where $\sum_{u=1}^\infty \rho^u_{\phi}=1$. Note that the decision is independent of time $t$, and thus is stationary. We have the following fact:

There exists a stationary $\phi$-only policy that achieves the optimal profit of the PM problem while satisfying stability condition \eqref{eq:queue_stable} and collision upper-bound requirement \eqref{eq:collision_con}, which is the solution of the following optimization problem:
%
\begin{align*}
\overline{R}^*=\text{Maximize}\;& \sum_{\phi\in\Phi}\Pi_{\phi}\sum_{u=1}^{\infty}R(\gamma_{\phi}^u;\phi)\rho^u_{\phi}\nonumber\\
\text{Subject to}\;& \sum_{\phi\in\Phi}\Pi_{\phi}\sum_{u=1}^{\infty}r(\gamma_{\phi}^u;\phi)\rho^u_{\phi}\le\sum_{\phi\in\Phi}\Pi_{\phi}\sum_{u=1}^{\infty}D(\gamma_{\phi}^u;\phi)\rho^u_{\phi}\\
&   \sum_{\phi\in\Phi}\Pi_{\phi}\sum_{u=1}^{\infty}X_i(\gamma_{\phi}^u;\phi)\rho^u_{\phi}\le \eta_i,\;\; i\in\mb^s_{max}
\end{align*}
The above fact is a special case of Theorem 4.5 in \cite{neely2010stochastic}. The proof is omitted for brevity.

Recall that the PMC policy is derived by minimizing the right hand side of the following inequality
\begin{align}
\label{eq:drift_penalty_expand2}
&\Delta(\boldsymbol{\Theta}(t))-V\expectation\,[R_{PMC}(t)|\boldsymbol{\Theta}(t)]\le D - V\expectation\,[R(t)|\boldsymbol{\Theta}(t)] \nonumber\\
&\quad\quad+Q(t)\expectation\,\left[O(t)A(t)-r(t)|\boldsymbol{\Theta}(t)\right] \nonumber\\
&\quad\quad+\sum_{i\in\mb^s_{\max}}Z_i(t)\expectation\,\left[X_i(t)-\eta_i|\boldsymbol{\Theta}(t)\right].
\end{align}
In other words, given the current queue backlogs for each slot $t$, the PMC policy minimizes the right hand side of \eqref{eq:drift_penalty_expand2} over all alternative feasible policies that could be implemented, including the optimal stationary $\phi$-only policy. Therefore, by plugging the optimal stationary $\phi$-only policy in the right hand side of \eqref{eq:drift_penalty_expand2}, we have
\begin{equation}
\label{eq:drift}
    \Delta(\boldsymbol{\Theta}(t))-V\expectation\,[R_{PMC}(t)|\boldsymbol{\Theta}(t)]\le D - V\overline{R}^*.
\end{equation}

Now we use the following lemma to obtain the performance bound in Theorem~\ref{th:performance}(b). 

\begin{lemma}
\label{th:Lyapunov}
 (Lyapunov Optimization) Suppose there are finite constants $V>0$, $D>0$,  such that for all time slots $t\in\{0,1,2,\dots\}$ and all possible values of $\boldsymbol{\Theta}(t)$, we have
\begin{equation}\label{eq:Lyapinov_opt}
  \Delta(\boldsymbol{\Theta}(t))-V\expectation\,[R(t)|\boldsymbol{\Theta}(t)] \le D- V\overline{R}^*.
\end{equation}
Then we have the following result
\begin{equation}\label{eq:optimality}
    {\underset{t\rightarrow\infty}{\lim\sup}\frac{1}{t}\sum_{\tau=0}^{t-1}\expectation\,[R(t)]}\ge \overline{R}^* - \frac{D}{V}.
\end{equation}
\end{lemma}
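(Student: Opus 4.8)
The plan is to convert the one-step drift-plus-penalty bound \eqref{eq:Lyapinov_opt} into a statement about the long-run time-average profit by telescoping. First I would take the expectation of both sides of \eqref{eq:Lyapinov_opt} with respect to the distribution of $\boldsymbol{\Theta}(t)$. Using the definition of the drift $\Delta(\boldsymbol{\Theta}(t))\mdef\expectation[L(\boldsymbol{\Theta}(t+1))-L(\boldsymbol{\Theta}(t))|\boldsymbol{\Theta}(t)]$ together with the tower property of conditional expectation, the conditioning on $\boldsymbol{\Theta}(t)$ disappears and I obtain, for every $t$,
\begin{equation*}
\expectation[L(\boldsymbol{\Theta}(t+1))]-\expectation[L(\boldsymbol{\Theta}(t))]-V\expectation[R(t)]\le D-V\overline{R}^*.
\end{equation*}

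Next I would sum this inequality over $t=0,1,\dots,T-1$. The Lyapunov-function differences form a telescoping sum, leaving only the endpoint terms, so that
\begin{equation*}
\expectation[L(\boldsymbol{\Theta}(T))]-\expectation[L(\boldsymbol{\Theta}(0))]-V\sum_{\tau=0}^{T-1}\expectation[R(\tau)]\le T(D-V\overline{R}^*).
\end{equation*}
I would then invoke two structural facts about the Lyapunov function: it is a sum of squares and hence nonnegative, so $\expectation[L(\boldsymbol{\Theta}(T))]\ge 0$, and the system starts empty ($Q(0)=0$ and $Z_i(0)=0$), so $\expectation[L(\boldsymbol{\Theta}(0))]=0$. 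Dropping the nonnegative endpoint term and rearranging gives
\begin{equation*}
\frac{1}{T}\sum_{\tau=0}^{T-1}\expectation[R(\tau)]\ge \overline{R}^*-\frac{D}{V}-\frac{\expectation[L(\boldsymbol{\Theta}(0))]}{VT}.
\end{equation*}

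Finally I would take $\limsup$ as $T\to\infty$; since $\expectation[L(\boldsymbol{\Theta}(0))]$ is a fixed finite constant, the last term vanishes, yielding \eqref{eq:optimality}. The argument is essentially bookkeeping, and the only points requiring care are the passage from the conditional bound \eqref{eq:Lyapinov_opt} to its unconditional counterpart and the handling of the boundary term: I must apply the tower property correctly so that the telescoping of $\expectation[L(\boldsymbol{\Theta}(\cdot))]$ is legitimate, and I must verify that $\expectation[L(\boldsymbol{\Theta}(0))]$ (or more generally any finite initial value) is finite so that the boundary contribution is $O(1/T)$ and does not contaminate the time average. The nonnegativity of $L$ — which lets me discard $\expectation[L(\boldsymbol{\Theta}(T))]$ rather than requiring an upper bound on it — is precisely what makes this one-sided profit inequality go through without invoking any stability property beyond \eqref{eq:Lyapinov_opt} itself.
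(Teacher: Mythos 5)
Your proof is correct, but it is worth noting that the paper does not actually prove this lemma at all: it simply declares it to be a special case of Theorem~4.2 in the cited Neely monograph and omits the argument. What you have written is, in effect, the standard proof of that cited theorem --- take expectations of the conditional drift bound \eqref{eq:Lyapinov_opt}, use the tower property so that $\expectation[\Delta(\boldsymbol{\Theta}(t))]=\expectation[L(\boldsymbol{\Theta}(t+1))]-\expectation[L(\boldsymbol{\Theta}(t))]$, telescope over $t=0,\dots,T-1$, discard $\expectation[L(\boldsymbol{\Theta}(T))]\ge 0$, divide by $VT$, and let $T\to\infty$ --- so your route makes the lemma self-contained where the paper outsources it. Your bookkeeping is sound, and you correctly identify the two places where structure is used: nonnegativity of $L$ (so only a one-sided bound on the terminal term is needed) and finiteness of $\expectation[L(\boldsymbol{\Theta}(0))]$ (here in fact zero, since $Q(0)=0$ and $Z_i(0)=0$). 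One small technical point you leave implicit: for the telescoping of expectations to be legitimate you also need $\expectation[L(\boldsymbol{\Theta}(t))]<\infty$ at every finite $t$, not just at $t=0$, since otherwise the differences being summed are not well defined; in this paper's setting that holds automatically because per-slot arrivals are bounded by $A_{\max}$ and per-slot collisions by $|\mathcal{B}^s_{\max}|$, so all queue backlogs (and hence $L$) grow at most linearly in $t$. With that observation added, your argument is a complete and correct substitute for the citation.
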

The above lemma is a special case of Theorem~4.2 in \cite{neely2010stochastic}. The proof is omitted for brevity.

Note that the inequality \eqref{eq:drift} is exact the condition \eqref{eq:Lyapinov_opt} in Lemma~\ref{th:Lyapunov}, thus the performance bound in Theorem~\ref{th:performance}(b) immediately follows.

\section{Pseudo code of Algorithm~\ref{alg:water-filling}}
\label{sec:search_threshold}
\begin{algorithm}[hbt]
\caption{Search Threshold $\Upsilon^l$ (or $\Upsilon^s$) for a given $C^s(t)$}
\label{alg:water-filling}
\begin{algorithmic}[1]
\Procedure{SearchingThreshold} {$C^s$}
\State Rearrange the channel indecent $i\in\mb^l_{\max}$ (or $i\in\mb^s_{\max}$) as the decreasing order of $g_i(t)$.
\State  $m\gets |\mb^l_{\max}|$ (or  $m\gets |\mb^s_{\max}|$), $\lambda\Leftarrow\Lambda(m)$ 
\While{$\lambda\ge g_m(t)$}
	\If {$m>1$}
	\State $m\gets m-1$		
	\State $\lambda\Leftarrow\Lambda(m)$
	\Else \State   \bf{break}
	\EndIf
\EndWhile
		\State $\Upsilon^l\gets m$ (or $\Upsilon^s\gets m$)
\EndProcedure
\end{algorithmic}
\end{algorithm}

\section{Impact of Queueing on the Revenue Maximization Problem}
\label{sec:queuing_effect}
What is the impact of the queuing effect on the pricing in the revenue maximization problem (\ref{eq:pricing})? Let's consider the following instantaneous revenue maximization problem without the queueing shift.
\begin{equation}
\label{eq:revenue}
    \text{Maximize}\;\; qD(q,M).
\end{equation}
For simplicity, we ignore the time index in the discussion.

Note that both problems in \eqref{eq:pricing} and \eqref{eq:revenue} may have multiple optimal solutions. For the purpose of obtaining intuitions, we will restrict our discussion to the case where there is a unique optimal price for both \eqref{eq:pricing} and \eqref{eq:revenue}. To guarantee this, we assume that revenue $R(D)\mdef \hat{q}(D)D$ is a strictly concave function of the demand\footnote{This assumption is common in the revenue management literature (\eg \cite{talluri2005theory}) to guarantee unique optimal pricing.}, where $\hat{q}(D)$ is defined as the inverse demand function, \ie $\hat{q}(D)\mdef \max\{\hat{q}: D(\hat{q},m)=D\}$\footnote{Since the demand function $D(q,m)$ is non-increasing in $q$, there can be multiple prices resulting the same demand.} for a given $m$. For simplicity, we denote the optimal price in revenue maximization \eqref{eq:pricing} as $q^*$, and the optimal price in  revenue maximization problem \eqref{eq:revenue} with the queuing shift as $q^{**}$. We will show that $q^{**}\ge q^*$, \ie the queuing effect leads to a higher price.

The objective of revenue maximization problem in (\ref{eq:revenue}) can be represented as $R(D)= \hat{q}(D)D$, and its optimal demand is denoted as $D^*$. By the first order optimality condition, $D^*$ satisfies that $R'(D^*)=0$, where $R'(\cdot)$ denotes the first order derivative of $R(\cdot)$.
When the queuing effect is taken into consideration, the objective of (\ref{eq:pricing}) can be represented as $R(D)-\frac{D Q}{V}$, and we denote its optimal demand as $D^{**}$. Again by the first optimality condition, $D^{**}$ satisfies that $R'(D^{**})=\frac{Q}{V}\ge 0$. Since the revenue function $R(D)$ is concave in $D$, $R'(D)$ is a decreasing function. Since $R'(D^{**})\ge R' (D^*)$, we obtain $D^{**}\le D^*$.
Furthermore, since the demand $D(q,m)$ is non-increasing in price $q$ for a given $m$, we have $q^{**}\ge q^*$. In other word, when incorporating the queuing effect, the optimal dynamic price $q^{**}$ in the PMC policy is higher than the optimal price $q^*$ in instantaneous revenue maximization problem without the shift. Moreover, the larger the queue length $Q$,  the higher the dynamic price $q$ in the PMC policy. When we perform such pricing in the system, a high price will decrease the demand, which will slow the increase of the queue length. Thus the dynamic pricing in the PMC policy also performs the functionality of congestion control to some extent.

For a more general case where the concavity assumption may not be satisfied, the queueing effect depends on the shape of the revenue function at the point $D^*$, \ie $q^{**}\ge q^{*}$ if and only if $R'(D^*)\le \frac{Q}{V}$.

\end{document}